\newtheorem{theorem}{Theorem}
\newtheorem{claim}[theorem]{Claim}
\newtheorem{corollary}[theorem]{Corollary}
\newtheorem{remark}{Remark}
\begin{document}

\title{Effects of the Generation Size and Overlap \\on Throughput and Complexity \\ in Randomized Linear Network Coding}
\author{Yao~Li,~\IEEEmembership{Student Member,~IEEE,}
Emina~Soljanin,~\IEEEmembership{Senior Member,~IEEE,}
and~Predrag~Spasojev\'c,~\IEEEmembership{Member,~IEEE,}
\thanks{Manuscript received April 15, 2010; revised August~14,~2010 and November~5,~2010.
The material in this work was presented in part at the IEEE International Symposium on Information Theory (ISIT'10), Austin, Texas, June 2010,
and the IEEE International Symposium on Network Coding (NetCod'10), Toronto,
Canada, June 2010. This work was in part supported by
NSF CNS Grant No. 0721888.}
\thanks{Y.\ Li and P.\ Spasojevi\'c are with WINLAB, the Department of Electrical and Computer Engineering,
Rutgers University, North Brunswick, NJ 08902 USA
emails:~\{yaoli,spasojev\}@winlab.rutgers.edu.}
\thanks{E.\ Soljanin is with the Mathematics of Networking and Communication Department, Enabling Computing Technologies, Bell Laboratories, Alcatel-Lucent, Murray Hill, NJ 07974, email: emina@alcatel-lucent.com.}
 }

\maketitle

\begin{abstract}
To reduce computational complexity and delay in randomized network
coded content distribution, and for some other practical reasons,
coding is not performed simultaneously over all content blocks, but
over much smaller, possibly overlapping subsets of these blocks,
known as generations. A penalty of this strategy is throughput reduction.
To analyze the throughput loss, we model coding over generations
with random generation scheduling as a coupon collector's brotherhood problem.
This model enables us to derive the expected number
of coded packets needed for successful decoding of the entire
content as well as the probability of decoding failure (the latter
only when generations do not overlap) and further, to quantify the tradeoff
between computational complexity and throughput. Interestingly, with
a moderate increase in the generation size, throughput quickly
approaches link capacity. Overlaps between generations can further improve
throughput substantially for relatively small generation sizes.
\end{abstract}

\begin{IEEEkeywords}
network coding, rateless codes, coupon collector's problem
\end{IEEEkeywords}

\section{Introduction}
\subsection{Motivation: Coding over Disjoint and Overlapping Generations}

Random linear network coding was proposed in
\cite{traceyISIT2003} for ``robust, distributed transmission and
compression of information in networks''. Subsequently, the idea found
a place in a peer-to-peer(P2P) file distribution system
Avalanche \cite{avalanche} from Microsoft. In P2P systems such as
BitTorrent, content distribution involves fragmenting the content at
its source, and using swarming techniques to disseminate the
fragments among peers. Systems such as Avalanche, instead,
circulate linear combinations of content fragments, which can be
generated by any peer. The motivation behind such a scheme is that,
it is hard for peers to make optimal decisions on the scheduling of
fragments based on their limited local vision, whereas when
fragments are linearly combined at each node, topology diversity is
implanted inherently in the data flows and can be exploited without
further co-ordination.

The introduction of network coding in P2P content distribution
systems brings about the issue of computational complexity. Consider
distributing a file consisting of $N$ fragments, each made up of $d$
symbols from a Galois field $GF(q)$ of size $q$. It takes
$\mathcal{O}(Nd)$ operations in $GF(q)$ to form a linear combination
per coded packet, and $\mathcal{O}(N^3+N^2d)$ operations, or,
equivalently, $\mathcal{O}(N^2+Nd)$ operations per information
packet, to decode the information packets by solving linear
equations. According to the implementers of UUSee \cite{uusee}, a peer-to-peer video streaming application
using randomized linear coding, even with the most optimized implementation,
going beyond $512$ fragments in each generation risks taxing a low-end CPU, typically used in power-efficient notebook computers.

In an effort to reduce computational complexity, information packets are partitioned into disjoint
subsets referred to as {\em generations}, and coding is done only within generations.
This approach scales down the
encoding and decoding problem from the whole file size $N$
to the generation size times the number of generations. The concept of generation in network coding was first proposed by Chou et al.\ in
\cite{choupractical} to handle the issue of network
synchronization. Coding over randomly scheduled generations was first theoretically analyzed by Maymounkov et al.\ in \cite{petarchunked}. Random scheduling of generations provides
the ``rateless'' property which reduces the need for receiver feedback and
 offers resilience to various erasure patterns over the communication link. In addition, in the peer-to-peer content distribution setting, random scheduling is to some degree a good approximation when global co-ordination among peers is impractical.

With random scheduling of generations, coded packets accumulate faster in some generations
than in others, even if all generations are scheduled equally probably.
While waiting for the last generation to become decodable, redundant packets are accumulated in other
generations. The situation is aggravated as the generation size decreases.
 One way to recover some of the throughput loss due to random scheduling without losing the benefits of reduced
 generation sizes is to allow generations to help each other in decoding. If the generations are allowed
to overlap, after some of the ``faster'' generations
are decoded, the number of unknown variables can be reduced
in those generations sharing information packets with the decoded ones, which in turn reduces the number of
coded packets needed to decode those generations, and enhances the throughput as a result.
Our goal is to characterize the effects of generation size and overlaps on the throughput and
complexity in randomized linear network coding.


\subsection{Related Work}

The performance of codes with random scheduling of
disjoint generations was first theoretically analyzed in
\cite{petarchunked} by Maymounkov et al., who referred to them as
{\em chunked codes}. Chunked codes allow convenient encoding at
intermediate nodes, and are readily suitable for peer-to-peer file
dissemination. In \cite{petarchunked}, the authors used an
adversarial schedule as the network model and characterized the code
performance under certain restrictions on the chunk(generation) size when the length of the information to be encoded tends to infinity.

Coding with overlapping generations was first studied in
\cite{queensoverlap} and \cite{carletonoverlap} with the goal to
improve throughput. Reference \cite{carletonoverlap} studied a ``head-to-toe'' overlapping scheme in
which only contiguous generations overlap for a given number of
information packets, and analyzed its asymptotic performance over a line network when the length of information goes to infinity. Another overlapping scheme with a
grid structure was proposed in \cite{queensoverlap}, analyzed
for short lengths (e.g., $4$ generations) and simulated for practical lengths. When properly designed,
 these codes show improved performance over codes with disjoint generations. In our work, we offer an analysis of
coding over disjoint and overlapping generations for finite but practically long information lengths.

\subsection{Organization and Main Contribution}
In this work, coding with both disjoint and
overlapping generations together with {\em random generation
scheduling} is studied from a coupon collection \cite{feller} perspective. Previously existing results from the classical coupon collector's problem, along with our extensions,
 enable us to characterize the code performance with finite information lengths, from which the asymptotic code performance can further be deduced.

Section \ref{sec:gen_model} introduces the general model for coding over generations, disjoint or overlapping,
over a unicast (binary erasure) link, and characterizes the computational cost for encoding and
decoding.

Section \ref{sec:rank} derives several results concerning linear independence among coded packets from the same generation. Such results serve
to link coupon collection to the decoding of content that has been encoded into multiple generations. Included (Claim~\ref{thm:wait_ccdf}) is a very good upper
 bound on the distribution of the number of coded packets needed for a specific generation for successful decoding.

Section \ref{sec:coupon} introduces the coupon collector's brotherhood problem and its variations that can be used to model coding over generations. Probability generating functions (Theorems \ref{thm:nonuniform_genfunc} and \ref{thm:uniform_genfunc}) and moments (Corollaries \ref{thm:nonuniform_moments} and \ref{thm:uniform_expected}) of the number of samplings needed to collect multiple copies of distinct coupons are derived for the random sampling of a {\it finite} set of coupons in Section \ref{subsec:coupon_finite}. Relevant asymptotic results on expected values and probability distributions in existing literature are recapitulated in Section \ref{subsec:coupon_limit} for performance characterization of coding over generations in the later part of the work. The section is presented in the coupon collection language and is in itself of independent interest for general readers interested in coupon collecting problems.

In Sections \ref{sec:nonoverlapping} and \ref{sec:overlapping}, results from the previous two sections are combined to enable the analysis of the effects of generation size and overlaps on the decoding latency/throughput of coding over disjoint or overlapping generations.

Section \ref{sec:nonoverlapping} studies the effects of generation size on the code throughput over a BEC channel for coding over disjoint generations. Section \ref{subsec:disjoint_moments} characterizes the mean and variance of the decoding latency (the number of coded packets transmitted until successful decoding) for {\em finite information lengths}, and Section \ref{subsec:disjoint_pe} provides a lower bound for the probability of decoding failure. A large gain in throughput is observed when the generation size increases from $1$ to a few tens.

In Section \ref{sec:overlapping}, the {\em random annex code} is proposed as an effort to improve code throughput by allowing random overlaps among generations. Section \ref{subsec:overlap_algorithm} lists an algorithm providing precise estimation of the expected decoding latency of the random annex code. The algorithm is based on the analysis of the overlapping structure in Section \ref{subsec:overlap_structure} and the results from the extended collector's brotherhood in Section \ref{sec:coupon}. Section \ref{subsec:overlap_results} demonstrates the effects of overlap sizes on code throughput is shown through both numerical computation and simulations. One of our interesting observations is that overlaps between generations can provide a tradeoff between computational complexity and decoding latency. In addition, without increasing the generation size (and hence computational complexity), it is still possible to improve code throughput significantly by allowing overlaps between generations.

\section{Coding Over Generations: The General Model} \label{sec:gen_model}
In this section, we describe a general random coding scheme over
generations.
Generations do not have to be disjoint or of
equal size, and random scheduling of generations does not have to be uniform.
We describe the coding scheme over a unicast link.

\subsection{Forming Generations} \label{subsec:formgen}
The file being distributed $\mathcal{F}$
is represented as a set of $N$ information packets,
$p_1,p_2,\dots,p_N$. Each information packet is a $d$-dimensional column vector of information
symbols in Galois Field $GF(q)$. Generations are non-empty subsets
of $\mathcal{F}$.

Suppose that $n$ generations,
$G_1,G_2,\dots,G_n$, are formed s.t.\ $\mathcal{F}=\cup_{j=1}^n
G_j$. A coding scheme is said to be non-overlapping if the
generations are disjoint, i.e., $\forall i\ne j$, $G_i\cap
G_j=\emptyset$; otherwise, the scheme is said to be overlapping.
The size of each generation $G_j$ is denoted by $g_j$, and its
elements $p^{(j)}_1,p^{(j)}_2,\dots,p^{(j)}_{g_j}$. For convenience,
we will occasionally also use $G_j$ to denote the matrix with
columns $p^{(j)}_1,p^{(j)}_2,\dots,p^{(j)}_{g_j}$.

\subsection{Encoding}
In each transmission, the source first selects one of the $n$
generations at random. The probability of choosing generation $G_i$
is $\rho_i$, $\sum_{i=1}^{n}\rho_i=1$. Let
$\boldsymbol{\rho}=(\rho_1,\rho_2,\dots,\rho_n)$. Once generation
$G_j$ is chosen, the source chooses a coding vector $\mathbf{e}=[e_1,e_2,\dots,e_{g_j}]^T$,
with each of the $g_j$ components
chosen independently and equally probably from $GF(q)$. A new packet
$\bar{p}$ is then formed by linearly combining packets from $G_j$ by
$\mathbf{e}$: $\bar{p}=\sum_{i=1}^{g_j}e_ip^{(j)}_{i}=\mathbf{e}\cdot G_j$
($G_j$ here denotes a matrix).

The coded packet $\bar{p}$ is then sent over the communication link to
the receiver along with the coding vector $\mathbf{e}$ and the
generation index $j$.
Figure \ref{fig:sysmodel} shows a diagram of the communication between the source and the receiver.
The generations shown in this example are chosen to be disjoint, but this is not necessary.
\begin{figure}[htbp]
\centering
\includegraphics[scale=0.8]{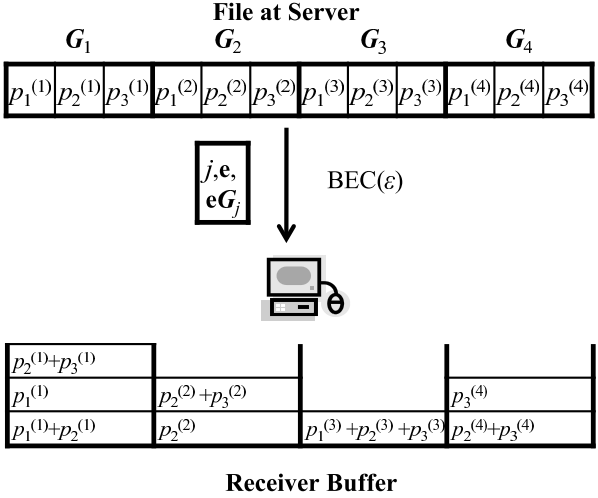}
\caption{A file divided into $N=12$ fragments and $n=4$ (disjoint) generations containing $h=3$ fragments each is available for distribution at the server. A receiver collects random linear combinations of randomly scheduled generations.}\label{fig:sysmodel}
\end{figure}


\subsection{Decoding}
Decoding starts with any generation $G_j$ for which the receiver has
collected $g_j$ coded packets with linearly independent coding
vectors. The information packets making up this generation are
decoded by solving a system of $g_j$ linear equations in $GF(q)$
formed by the coded packets on one side and the linear combinations
of the information packets by the coding vectors on the other. Since
generations are allowed to overlap, a decoded information packet
may also participate in other generations, from the equations of which the information packet is
then removed as an
unknown variable. Consequently,
in all the generations overlapping with the decoded generations,
the number of unknown packets is reduced. As a result, some generations may
become decodable even if no new coded packets are received from
the source. Again, the newly decoded generations resolve some
unknowns of the generations they overlap with, which in turn may
become decodable and so on. We declare successful decoding when all
$N$ information packets have been decoded.

The coding scheme described here is inherently rateless and easily
extendable to more general network topologies that allow coding at
intermediate network nodes.

\subsection{Packet Overhead}

Contained in each coded packet are the index of a generation $G_j$
and a linear combining vector for $G_j$ which together take up
$\lceil\log_2 n\rceil+g_j\lceil\log_2 q\rceil$ bits. Meanwhile, the
data in each coded packet comprise $d\lceil\log_2 q\rceil$ bits. The
generation size makes a more significant contribution to packet
overhead and such contribution is non-negligible due to the limited
size ($\sim$ a few KB) of transmission packets in practical
networks. This gives another reason to keep generations small,
besides reducing computational complexity.

\subsection{Computational Complexity}\label{subsec:complexity}
The computational complexity for encoding is
$\mathcal{O}(d\max\{g_j\})$ per coded packet for linearly combining
the $g_j$ information packets in each generation (recall that $d$ is
the number of $GF(q)$ symbols in each information packet, as defined
in Section \ref{subsec:formgen}). For decoding, the largest number
of unknowns in the systems of linear equations to be solved is not
more than $\max\{g_j\}$, and therefore the computational complexity
is upper bounded by $\mathcal{O}((\max \{g_j\})^2+d\max\{g_j\})$ per
information packet.

\subsection{Decoding Latency}\label{subsec:latency}
In this paper, we focus on the tradeoff between the computational
complexity and the {\it decoding latency} of these codes over
unicast links with erasures. Decoding latency here is defined as the
number of coded packets transmitted until successful decoding of all
the information packets, and {\it overhead} is the difference between the
number of information packets and the decoding latency. We assume a
memoryless BEC with a constant erasure rate $\epsilon$. Since our
coding scheme is rateless, each coded packet is statistically of the
same importance, and so the average decoding latency is inversely
proportional to the achievable capacity $(1-\epsilon)$ of the link. The throughput of the code
is inversely proportional to the decoding latency for given information length.




\section{Collecting Coded Packets and Decoding}\label{sec:rank}

A generation $G_i$ is not decodable until the number of
linearly independent equations collected for $G_i$ reaches the number
of its information packets not yet resolved by decoding other generations.
The connection between the number of coded
packets collected and the linear independence among these coded
packets has to be established before we can predict the decoding latency of codes over
generations using the collector's brotherhood model that will be
discussed in the next section.



Let $M(g,x)$ be the number of coded packets from a generation of size $g$
adequate for collecting $x$ linearly independent equations.
Then $M(g,x)$ has expected value \cite{monograph}

\begin{equation} \label{eq:hwait}
E[M(g,x)]=\sum_{j=0}^{x-1}\frac{1}{1-q^{j-g}}.
\end{equation}
Approximating summation by integration, from (\ref{eq:hwait}) we get
\begin{align}
E[M(g,x)]\lessapprox&\int_0^{x-1}\frac{1}{1-q^{y-g}}dy+\frac{1}{1-q^{x-1-g}}\notag\\
=&x+\frac{q^{x-1-g}}{1-q^{x-1-g}}+\log_q\frac{1-q^{-g}}{1-q^{x-1-g}}.
\end{align}
Let
\begin{equation}\label{eq:wait_eta}
\eta_g(x)=x+\frac{q^{x-1-g}}{1-q^{x-1-g}}+\log_q\frac{1-q^{-g}}{1-q^{x-1-g}}.
\end{equation}
We can use $\eta_g(x)$ to estimate the number of coded packets needed
from a certain generation to gather $x$ linearly independent
equations.

In addition, we have the following Claim \ref{thm:wait_ccdf} which upper bounds the tail probability of $M(g,g)$, the number of
coded packets needed for a certain generation to gather enough linearly independent equations for decoding.
\begin{claim}\label{thm:wait_ccdf}
There exist positive constants $\alpha_{q,g}$ and
$\alpha_{2,\infty}$ such that, for $s\ge g$,
\begin{align*}
&\textnormal{Prob}[M(g,g)>s] = 1-\prod_{k=0}^{g-1}(1-q^{k-s})\\
&<1-\exp(-\alpha_{q,g} q^{-(s-g)}) <1-\exp(-\alpha_{2,\infty}
q^{-(s-g)}).
\end{align*}
Also, since $1-\exp(-x)<x$ for $x>0$,
\begin{equation}
\textnormal{Prob}[M(g,g)>s]<\alpha_{q,g} q^{-(s-g)}.
\end{equation}
\end{claim}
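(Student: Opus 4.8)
The plan is to first pin down the exact tail probability stated in the claim, and then sharpen it into the two exponential bounds and the final linear bound in turn.

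\emph{Step 1 (the equality).} I would observe that $M(g,g)>s$ precisely when the $s$ coded packets received so far do not yet furnish $g$ linearly independent coding vectors, i.e.\ when the $s\times g$ matrix whose rows are the received coding vectors has rank strictly less than $g$. Since each coding vector is drawn uniformly and independently from $GF(q)^g$, I would count the complementary event (full column rank) column by column in $GF(q)^s$: the $k$-th column lies outside the span of the previous $k-1$ columns, a subspace of size $q^{k-1}$, with probability $1-q^{(k-1)-s}$. Multiplying these gives $\prod_{k=0}^{g-1}(1-q^{k-s})$ for full rank, hence $\textnormal{Prob}[M(g,g)>s]=1-\prod_{k=0}^{g-1}(1-q^{k-s})$. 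This is the same geometric mechanism that produces (\ref{eq:hwait}).

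\emph{Step 2 (the first exponential bound, the crux).} It suffices to show $-\log\prod_{k=0}^{g-1}(1-q^{k-s})\le\alpha_{q,g}\,q^{-(s-g)}$ with the choice $\alpha_{q,g}=-\log\prod_{j=1}^{g}(1-q^{-j})$, since exponentiating then yields $\prod_{k=0}^{g-1}(1-q^{k-s})\ge\exp(-\alpha_{q,g}q^{-(s-g)})$. The key analytic fact I would use is that for fixed $a$ the chord slope $x\mapsto \frac{-\log(1-ax)}{x}$ is nondecreasing on its domain, because $\phi(x)=-\log(1-ax)$ is convex with $\phi(0)=0$. Applying this termwise with $a=q^{k}$ and comparing the arguments $x=q^{-s}\le q^{-g}$ gives $-\log(1-q^{k-s})\le q^{-(s-g)}\big[-\log(1-q^{k-g})\big]$; summing over $k=0,\dots,g-1$ factors out $q^{-(s-g)}$ and leaves exactly $\alpha_{q,g}$. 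The monotonicity is strict for $s>g$, which yields the strict inequality there; at $s=g$ the bound is met with equality, so the ``$<$'' is read with this endpoint understanding (or $\alpha_{q,g}$ is taken infinitesimally larger).

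\emph{Step 3 (the universal constant and the linear bound).} For the $\alpha_{2,\infty}$ inequality I would note that $\alpha_{q,g}=-\log\prod_{j=1}^{g}(1-q^{-j})$ decreases in $q$ (each factor grows toward $1$) and increases in $g$ (each new factor lies below $1$), so it is strictly dominated by its value at $q=2,\ g\to\infty$, namely $\alpha_{2,\infty}:=-\log\prod_{j=1}^{\infty}(1-2^{-j})$; the Euler product $\prod_{j\ge1}(1-2^{-j})$ converges to a strictly positive limit, so $\alpha_{2,\infty}$ is a finite positive constant (and $\alpha_{q,g}>0$ for the same reason). Then $\alpha_{q,g}<\alpha_{2,\infty}$ forces $\exp(-\alpha_{q,g}q^{-(s-g)})>\exp(-\alpha_{2,\infty}q^{-(s-g)})$, which is the second inequality, and the final bound $\textnormal{Prob}[M(g,g)>s]<\alpha_{q,g}q^{-(s-g)}$ is immediate from $1-\exp(-x)<x$ at $x=\alpha_{q,g}q^{-(s-g)}$.

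The main obstacle is Step 2: extracting the clean factor $q^{-(s-g)}$ from the log-product and recognizing that the residual sum is exactly $-\log\prod_{j=1}^{g}(1-q^{-j})=\alpha_{q,g}$. Once the convexity / chord-slope viewpoint is in place this becomes routine, the positivity and convergence of the Euler product needed for $\alpha_{2,\infty}$ is classical, and the remaining steps are elementary.
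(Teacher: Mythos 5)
Your proof is correct and arrives at exactly the paper's constants, $\alpha_{q,g}=-\ln\prod_{j=1}^{g}(1-q^{-j})=-\ln\mathrm{Prob}[M(g,g)\le g]$ and $\alpha_{2,\infty}$ its limit at $q=2$, $g\to\infty$, but your key step is packaged differently from the paper's. The paper proves $\ln\mathrm{Prob}[M(g,g)\le s]\ge q^{-(s-g)}\ln\mathrm{Prob}[M(g,g)\le g]$ by brute-force series manipulation: expand each $\ln(1-q^{k-s})$ via the Mercator series, swap the order of summation, sum the geometric series in $k$ to get the exact identity $\ln\mathrm{Prob}[M\le s]=-q^{-(s-g)}\sum_{j\ge1}\frac{1}{j}q^{-(j-1)(s-g)}\frac{1-q^{-jg}}{q^{j}-1}$, and then bound term-by-term in $j$ using $q^{-(j-1)(s-g)}\le 1$. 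Your chord-slope argument --- $\phi(x)=-\log(1-ax)$ is convex with $\phi(0)=0$, so $\phi(x)/x$ is nondecreasing, applied per factor with $a=q^{k}$ at $x=q^{-s}\le q^{-g}$ --- delivers the same per-factor inequality $-\log(1-q^{k-s})\le q^{-(s-g)}\bigl(-\log(1-q^{k-g})\bigr)$ in one conceptual step, sidestepping the double-sum bookkeeping (indeed the paper's termwise $j$-bound is precisely what one gets by expanding your chord inequality in series, so the two are equivalent in substance, with yours the cleaner justification and the paper's yielding an exact intermediate identity). You also supply two things the paper leaves implicit: the column-by-column full-rank counting behind the stated equality for $\mathrm{Prob}[M(g,g)>s]$, and the positivity/finiteness of $\alpha_{2,\infty}$ via convergence of the Euler product $\prod_{j\ge1}(1-2^{-j})$ together with the monotonicity of $\alpha_{q,g}$ in $q$ and $g$. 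One shared blemish, which you honestly flag and the paper glosses over: at $s=g$ the first inequality holds with equality, so the strict ``$<$'' in the claim is only valid for $s>g$ (or with $\alpha_{q,g}$ nudged upward); this does not affect the final bound $\mathrm{Prob}[M(g,g)>s]<\alpha_{q,g}q^{-(s-g)}$, which remains strict since $1-e^{-x}<x$ for $x>0$.
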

\begin{proof}
Please refer to Appendix \ref{app:rank}.
\end{proof}

We will use Claim \ref{thm:wait_ccdf} in Theorem \ref{thm:disjoint_expected} in Section
\ref{sec:nonoverlapping} to derive an upper bound to the expected overhead of coding over disjoint
generations.


\section{Coupon Collector's Brotherhood and Collecting Coded Packets from Generations}
\label{sec:coupon}

The coupon collector's brotherhood
problem~\cite{doubledixiecup,brotherhood} studies quantities related
to the completion of $m$ sets of $n$ distinct coupons by sampling a
set of $n$ distinct coupons uniformly at random with replacement. In
analogy, coded packets belonging to generation $j$ can be
viewed as copies of coupon $j$, and hence the process of collecting
coded packets when generations are scheduled uniformly at random can
be modeled as collecting multiple copies of distinct coupons.

Because of possible linear dependence among coded packets and the overlaps between generations,
the numbers of coded packets needed
for each of the $n$ generations to ensure successful decoding, however, are $n$ random variables. Therefore, we must generalize the
coupon collector's brotherhood model from collecting a uniform number of copies for all coupons to collecting different numbers of copies for different coupons,
before it can be applied to the analysis
 of the throughput performance of
coding over generations. In this section, the original collector's brotherhood model is generalized in two ways. And later in this paper,
 the analysis of the throughput performance of coding over disjoint generations in Section~\ref{sec:nonoverlapping} rests on the first generalization,
 whereas that of coding over overlapping generations in Section~\ref{sec:overlapping} rests on the second generalization.
As our results are of more general interest than the coding-over-generations problem, we will express
them in the coupon collection language. For example, the probability $\rho_i$
of scheduling generation $G_i$ (defined in Section~\ref{sec:gen_model}) here
refers to the probability of sampling a copy of coupon $G_i$, for
$i=1,2,\dots,n$.

%



\subsection{Generating Functions, Expected Values and Variances}
\label{subsec:coupon_finite}
For any $m\in\mathbb{N}$, we define $S_m(x)$ as follows:
\begin{align}\label{eq:sm_m}
S_m(x)=&1+\frac{x}{1!}+\frac{x^2}{2!}+\dots+\frac{x^{m-1}}{(m-1)!}\quad(m\ge
1)\\
\label{eq:sm_0}S_{m}(x)=&0 \quad(m\le 0) ~\text{and} ~
S_{\infty}(x)= e^x.
\end{align}

Let the total number of samplings needed to ensure that at least
$m_i(\ge0)$ copies of coupon $G_i$ are collected for all
$i=1,2,\dots, n$ be $T(\boldsymbol{\rho},\mathbf{\mathbf{m}})$,
where $\mathbf{m}=(m_1,m_2,\dots,m_n)$. The following Theorem
\ref{thm:nonuniform_genfunc} gives $\varphi_{T(\boldsymbol{\rho},\mathbf{m})}(z)$,
 the generating function of the
tail probabilities of $T(\boldsymbol{\rho},\mathbf{\mathbf{m}})$. This result is
generalized from \cite{doubledixiecup} and \cite{brotherhood}, and
its proof uses the Newman-Shepp symbolic method in
\cite{doubledixiecup}. Boneh et al.~\cite{couponrevisited} gave the same
generalization, but we restate it here for use in our analysis of
coding over disjoint generations.  If for each $j=1,2,\dots,n$, the number of coded packets needed from generation $G_j$ for
its decoding is known to be $m_j$ (which can be strictly larger than the generation size $g_j$), $T(\boldsymbol{\rho},\mathbf{\mathbf{m}})$
then gives the total number of coded packets needed to ensure successful decoding of the entire content
when the generations are scheduled according to the probability vector $\boldsymbol{\rho}$.


\begin{theorem}(Non-Uniform Sampling)\label{thm:nonuniform_genfunc}
Let
\begin{equation}\label{eq:nonuniform_genfunc_def}
\varphi_{T(\boldsymbol{\rho},\mathbf{m})}(z)=\sum_{i\ge0}\mathrm{Prob}[T(\boldsymbol{\rho},\mathbf{m})>i]z^i.
\end{equation}
Then,
\begin{align}\label{eq:nonuniform_genfunc}
&\varphi_{T(\boldsymbol{\rho},\mathbf{m})}(z)=\\&\int_0^\infty
\Big\{e^{-x(1-z)}-
\prod_{i=1}^{n}\big[e^{-\rho_ix(1-z)}
-S_{m_i}(\rho_i
xz)e^{-\rho_i x}\big]\Big\}dx.\notag
\end{align}
\end{theorem}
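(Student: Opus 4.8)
The plan is to compute $\mathrm{Prob}[T(\boldsymbol{\rho},\mathbf{m})\le i]$ in closed form and then assemble the tail generating function from it. First I would note that after $i$ independent samplings, the probability that coupon $G_j$ is drawn exactly $a_j$ times, for a fixed composition $(a_1,\dots,a_n)$ with $\sum_j a_j=i$, is the multinomial term $\frac{i!}{a_1!\cdots a_n!}\prod_j\rho_j^{a_j}$. The event $\{T(\boldsymbol{\rho},\mathbf{m})\le i\}$ is precisely the event that the realized counts satisfy $a_j\ge m_j$ for every $j$, so
\begin{equation*}
\mathrm{Prob}[T(\boldsymbol{\rho},\mathbf{m})\le i]=i!\sum_{\substack{a_1+\cdots+a_n=i\\ a_j\ge m_j\,\forall j}}\prod_{j=1}^n\frac{\rho_j^{a_j}}{a_j!}.
\end{equation*}

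Next I would recognize the inner sum as a coefficient extraction. By the definition of $S_{m}$ we have $\sum_{a\ge m}\frac{(\rho x)^a}{a!}=e^{\rho x}-S_{m}(\rho x)$, so the product $g(x):=\prod_{j=1}^n\bigl(e^{\rho_j x}-S_{m_j}(\rho_j x)\bigr)$ satisfies $[x^i]g(x)=\sum_{\sum a_j=i,\,a_j\ge m_j}\prod_j\frac{\rho_j^{a_j}}{a_j!}$, whence $\mathrm{Prob}[T\le i]=i!\,[x^i]g(x)$. The Newman--Shepp symbolic step then converts the factorial weight into an integral: using $i!=\int_0^\infty t^i e^{-t}\,dt$, I would write $\mathrm{Prob}[T\le i]\,z^i=\int_0^\infty e^{-t}\,[x^i]g(x)\,(zt)^i\,dt$ and sum over $i$ to get
\begin{equation*}
\sum_{i\ge0}\mathrm{Prob}[T\le i]\,z^i=\int_0^\infty e^{-t}\,g(zt)\,dt .
\end{equation*}
Combining this with $\frac{1}{1-z}=\int_0^\infty e^{-t(1-z)}\,dt$ and the identity $\varphi_{T(\boldsymbol{\rho},\mathbf{m})}(z)=\sum_i\bigl(1-\mathrm{Prob}[T\le i]\bigr)z^i=\frac{1}{1-z}-\sum_i\mathrm{Prob}[T\le i]\,z^i$ collapses everything into a single integral with integrand $e^{-t(1-z)}-e^{-t}g(zt)$.

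The last step is the cosmetic rearrangement matching the stated form. Since $\sum_j\rho_j=1$, we have $e^{-t}=\prod_j e^{-\rho_j t}$, and I would distribute these factors across $g(zt)=\prod_j\bigl(e^{\rho_j zt}-S_{m_j}(\rho_j zt)\bigr)$ so that each factor becomes $e^{-\rho_j t(1-z)}-S_{m_j}(\rho_j zt)\,e^{-\rho_j t}$; renaming $t$ as $x$ then reproduces (\ref{eq:nonuniform_genfunc}) exactly.

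The one point genuinely needing care — and the step I expect to be the main obstacle rather than the algebra — is justifying the interchange of summation and integration and the convergence of $\varphi_{T(\boldsymbol{\rho},\mathbf{m})}(z)$ as a power series in $z$. This is clean to handle for $0\le z<1$: every factor $e^{\rho_j x}-S_{m_j}(\rho_j x)=\sum_{a\ge m_j}\frac{(\rho_j x)^a}{a!}$ has nonnegative coefficients and is bounded by $e^{\rho_j x}$, so $g(zt)\le e^{zt}$ and $\int_0^\infty e^{-t}g(zt)\,dt\le\int_0^\infty e^{-t(1-z)}\,dt=\frac{1}{1-z}<\infty$; nonnegativity of all terms then lets Tonelli (or monotone convergence) validate the termwise integration. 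Everything else reduces to bookkeeping with the multinomial expansion and the definition of $S_m$.
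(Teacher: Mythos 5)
Your proposal is correct and is essentially the paper's own Newman--Shepp argument: your multinomial coefficient extraction from $g(x)=\prod_{j}\bigl(e^{\rho_j x}-S_{m_j}(\rho_j x)\bigr)$ is exactly what the paper's monomial-removing operator $f$ encodes, and both proofs hinge on the same conversion $i!=\int_0^\infty t^i e^{-t}\,dt$ followed by the same product identification. The only differences are cosmetic---you work with $\mathrm{Prob}[T(\boldsymbol{\rho},\mathbf{m})\le i]$ and subtract from $\tfrac{1}{1-z}$, where the paper keeps the tail event $\mathrm{Prob}[T(\boldsymbol{\rho},\mathbf{m})>i]$ throughout---plus your explicit Tonelli justification of the interchange for $0\le z<1$, a point the paper leaves implicit.
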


\begin{proof}
 Please refer to Appendix \ref{app:coupon_proofs}, where we give a full
 proof of the theorem to demonstrate the Newman-Shepp symbolic method \cite{doubledixiecup},
 which is also used in the proof of our other generalization in Theorem~\ref{thm:uniform_genfunc}.
\end{proof}

The expected value and the variance of
$T(\boldsymbol{\rho},\mathbf{\mathbf{m}})$ follow from the tail
probability generating function derived in Theorem~\ref{thm:nonuniform_genfunc}.
\begin{corollary}\label{thm:nonuniform_moments}
\begin{align*}
E[T(\boldsymbol{\rho},\mathbf{\mathbf{m}})]&=\varphi_{T(\boldsymbol{\rho},\mathbf{m})}(1)\\
&=\int_0^\infty
\left\{1-\prod_{i=1}^{n}\left[1-S_{m_i}(\rho_i x)e^{-\rho_i
x}\right]\right\}dx,\\
Var[T(\boldsymbol{\rho},\mathbf{\mathbf{m}})]&=2\varphi_{T(\boldsymbol{\rho},\mathbf{m})}'(1)+\varphi_{T(\boldsymbol{\rho},\mathbf{m})}(1)
-\varphi_{T(\boldsymbol{\rho},\mathbf{m})}^2(1).
\end{align*}
\end{corollary}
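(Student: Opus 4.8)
The plan is to extract both moments from the tail probability generating function $\varphi_{T(\boldsymbol{\rho},\mathbf{m})}(z)$ of Theorem~\ref{thm:nonuniform_genfunc} using the standard identities that relate tail probabilities of a nonnegative integer-valued random variable to its moments, and then to recover the explicit integral for the mean by setting $z=1$ in~(\ref{eq:nonuniform_genfunc}). Throughout I abbreviate $T=T(\boldsymbol{\rho},\mathbf{m})$.

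For the expected value I would use $E[T]=\sum_{i\ge 0}\mathrm{Prob}[T>i]$, valid for any nonnegative integer-valued $T$, which is exactly $\varphi_{T}(1)$. Substituting $z=1$ into~(\ref{eq:nonuniform_genfunc}) then yields the stated closed form: the factor $e^{-x(1-z)}$ and each $e^{-\rho_i x(1-z)}$ collapse to $1$, while $S_{m_i}(\rho_i xz)$ becomes $S_{m_i}(\rho_i x)$, giving
\begin{equation*}
E[T]=\int_0^\infty\Big\{1-\prod_{i=1}^{n}\big[1-S_{m_i}(\rho_i x)e^{-\rho_i x}\big]\Big\}\,dx.
\end{equation*}

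For the variance I would differentiate the power series term by term, obtaining $\varphi_{T}'(z)=\sum_{i\ge 1} i\,\mathrm{Prob}[T>i]z^{i-1}$, and evaluate at $z=1$. Interchanging the order of summation,
\begin{align*}
\varphi_{T}'(1)
&=\sum_{i\ge 0} i\,\mathrm{Prob}[T>i]
=\sum_{i\ge 0} i\sum_{j>i}\mathrm{Prob}[T=j]\\
&=\sum_{j\ge 1}\mathrm{Prob}[T=j]\sum_{i=0}^{j-1}i
=\tfrac12\,E[T(T-1)].
\end{align*}
Rearranging gives $E[T^2]=2\varphi_{T}'(1)+E[T]=2\varphi_{T}'(1)+\varphi_{T}(1)$, and therefore $\mathrm{Var}[T]=E[T^2]-(E[T])^2=2\varphi_{T}'(1)+\varphi_{T}(1)-\varphi_{T}^2(1)$, as claimed.

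The result is essentially a bookkeeping consequence of Theorem~\ref{thm:nonuniform_genfunc}, so the only point I would treat carefully is the validity of the evaluations at $z=1$. This is not a delicate boundary issue: $\mathrm{Prob}[T>i]$ decays geometrically in $i$, since the number of copies of each coupon after $i$ samplings is binomial and $\mathrm{Prob}[\mathrm{Bin}(i,\rho_j)<m_j]$ falls off like $(1-\rho_j)^i$ up to a polynomial factor, so a union bound over coupons gives geometric decay of the tail. Hence $z=1$ lies strictly inside the disk of convergence of $\varphi_{T}$, term-by-term differentiation is automatic, and every summation interchange above is over an absolutely convergent series. No deeper estimate is required.
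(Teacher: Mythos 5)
Your proposal is correct and follows essentially the same route as the paper's own proof: both derive $E[T]=\varphi_T(1)$ and $\varphi_T'(1)=\tfrac12 E[T(T-1)]$ by interchanging the order of summation in the tail-probability series, then obtain the variance from $E[T^2]=2\varphi_T'(1)+\varphi_T(1)$ and read off the mean by setting $z=1$ in the integral of Theorem~\ref{thm:nonuniform_genfunc}. Your added justification that $z=1$ lies strictly inside the disk of convergence (via the geometric decay of $\mathrm{Prob}[T>i]$) is a sound touch the paper only gestures at later, but it does not change the argument.
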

\begin{proof}
Please refer to Appendix \ref{app:coupon_proofs}.
\end{proof}


Note that in Theorem~\ref{thm:nonuniform_genfunc} and Corollary~\ref{thm:nonuniform_moments}, $m_i$-s are allowed to be $0$, thus including the case where only a specific subset of the coupons is of interest.
Theorem~\ref{thm:nonuniform_genfunc} and Corollary~\ref{thm:nonuniform_moments} are also useful for the analysis of coding over generations when there is a difference in priority among the generations. For instance,
in layered coded multimedia content, the generations containing the packets of the basic layer could be
given a higher priority than those containing enhancement layers because of a hierarchical reconstruction at the receiver.

 In the following, we present another generalization of the collector's brotherhood model. Sometimes
 we are simply interested in collecting a coupon subset of a certain size,
 regardless of the specific content of the subset. This can be further extended to the following more complicated
 case: for each $i=1,2,\dots,A(A\ge1)$, ensure that there exists a subset of $\{G_1,G_2,\dots,G_n\}$ such that each of its $k_i$ elements has at least $m_i$ copies
 in the collected samples.
Such a generalization is intended for treatment of
coding over equally important generations, for example, when each generation is a
substream of multiple-description coded data. In this generalization, the generation scheduling (coupon sampling) probabilities are assumed to
be uniform, i.e., $\rho_1=\rho_2=\dots=\rho_n=1/n$.

Suppose that for some positive integer $A\le n$, integers $k_1,\dots,k_A$ and
$m_1,\dots,m_A$ satisfy $1\le k_1<\dots<k_A\le n$ and
$\infty=m_0>m_1>\dots> m_A>m_{A+1}=0$.
We are interested in the total number
$U(\mathbf{m},\mathbf{k})$  of coupons that needs to be collected,
to ensure that the number of
distinct coupons for which at least $m_i$ copies have been collected
is at least $k_i$, for all $i=1,2,\dots,A$, where
$\mathbf{m}=(m_1,m_2,\dots,m_A)$ and $\mathbf{k}=(k_1,k_2,\dots,k_A)$.
The following Theorem~\ref{thm:uniform_genfunc} gives the generating function $\varphi_{U(\mathbf{m},\mathbf{k})}(z)$ of $U(\mathbf{m},\mathbf{k})$.
\begin{theorem}(Uniform Sampling)\label{thm:uniform_genfunc}
\begin{align}\label{eq:uniform_genfunc}
&\varphi_{U(\mathbf{m},\mathbf{k})}(z)=n\int_{0}^{\infty}\!\!e^{-nx}\Bigl\{e^{nxz}-\\
&\!\!\!\!\sum_{{{(i_0,i_1,\dots,i_{A+1}):\atop i_0=0,i_{A+1}=n}\atop
i_j\in[k_j, i_{j+1}]}\atop j=1,2,\dots,A}\!\!\prod_{j=0}^{A}{{i_{j+1}}\!\!\choose{i_j}}\Bigl[S_{m_{j}}(xz)-S_{m_{j+1}}(xz)\Bigr]^{i_{j+1}-i_{j}}\Bigr\}dx.\notag
\end{align}
\end{theorem}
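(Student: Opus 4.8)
The plan is to follow the same Newman--Shepp symbolic method used in the proof of Theorem~\ref{thm:nonuniform_genfunc}, now specialized to uniform sampling ($\rho_i=1/n$) but with the more delicate ``at least $k_j$ coupons at level $m_j$'' stopping rule. First I would write $\varphi_{U(\mathbf m,\mathbf k)}(z)=\sum_{t\ge0}\mathrm{Prob}[U(\mathbf m,\mathbf k)>t]\,z^t$ and note that, because the per-coupon counts are nondecreasing in $t$, the event $\{U>t\}$ is exactly the event that the count vector $(c_1,\dots,c_n)$ produced by the first $t$ draws is \emph{bad}, i.e.\ violates at least one of the $A$ constraints. A length-$t$ draw sequence with prescribed counts $(c_1,\dots,c_n)$, $\sum_i c_i=t$, occurs with probability $n^{-t}\binom{t}{c_1,\dots,c_n}$, so $\mathrm{Prob}[U>t]=n^{-t}\,t!\,[w^t]F_{\mathrm{bad}}(w)$, where $F_{\mathrm{bad}}(w)=\sum_{\text{bad }(c_i)}\prod_{i=1}^n w^{c_i}/c_i!$ is the exponential generating function summing $\prod_i w^{c_i}/c_i!$ over all bad count vectors.

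The symbolic step is then the integral representation $t!=\int_0^\infty u^t e^{-u}\,du$, which I would substitute into the series and interchange with the summation to obtain $\varphi_{U(\mathbf m,\mathbf k)}(z)=\int_0^\infty e^{-u}F_{\mathrm{bad}}(uz/n)\,du$; the rescaling $u=nx$ puts this in the stated outer form $n\int_0^\infty e^{-nx}F_{\mathrm{bad}}(xz)\,dx$. It remains to identify $F_{\mathrm{bad}}$. Writing $F_{\mathrm{bad}}=F_{\mathrm{all}}-F_{\mathrm{good}}$ with $F_{\mathrm{all}}(w)=(\sum_{c\ge0}w^c/c!)^n=e^{nw}$ produces the leading $e^{nxz}$ term, so the real work is to enumerate the \emph{good} configurations. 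Here I would classify each coupon by the ``band'' its count falls in: band $j$ ($0\le j\le A$) consists of coupons with count in $[m_{j+1},m_j-1]$, whose single-coupon EGF is precisely $S_{m_j}(w)-S_{m_{j+1}}(w)$ by definitions~(\ref{eq:sm_m})--(\ref{eq:sm_0}). Letting $i_j$ denote the number of coupons attaining at least $m_j$ copies, band $j$ holds $i_{j+1}-i_j$ coupons, the constraint ``at least $k_j$ coupons reach level $m_j$'' becomes $i_j\ge k_j$, and summing the multinomial band-assignments over all admissible profiles $(i_0,\dots,i_{A+1})$ with $i_0=0$, $i_{A+1}=n$ yields exactly the stated sum for $F_{\mathrm{good}}$.

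Two points need care. The combinatorial one is verifying that the product of binomials telescopes to the multinomial coefficient, $\prod_{j=0}^A\binom{i_{j+1}}{i_j}=\frac{n!}{\prod_{j=0}^A(i_{j+1}-i_j)!}$, so that $F_{\mathrm{good}}$ is genuinely the EGF counting good count vectors weighted by $\prod_i 1/c_i!$; this, together with the observation that ``$i_j$ coupons have count $\ge m_j$'' is equivalent to the nested inequalities $0=i_0\le i_1\le\cdots\le i_{A+1}=n$, is the heart of the argument and the step I expect to be most error-prone in the bookkeeping of indices and of the degenerate conventions $m_0=\infty$, $m_{A+1}=0$ (so that band $0$ involves $S_{m_0}(w)=S_\infty(w)=e^w$ and band $A$ involves $S_{m_{A+1}}(w)=S_0(w)=0$). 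The analytic point is justifying the interchange of summation and integration: for $|z|<1$ the tail probabilities are summable and $F_{\mathrm{bad}}$ has nonnegative coefficients, so I would invoke Tonelli (or dominated convergence) to legitimize the exchange, exactly as in the proof of Theorem~\ref{thm:nonuniform_genfunc}.
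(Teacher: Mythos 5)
Your proposal is correct and follows essentially the same route as the paper: your EGF formulation with $F_{\mathrm{bad}}=F_{\mathrm{all}}-F_{\mathrm{good}}$ is precisely the Newman--Shepp symbolic method that the paper implements via the monomial-removing operator $f$, with the identical band decomposition (single-coupon EGF $S_{m_j}(w)-S_{m_{j+1}}(w)$ under the conventions $m_0=\infty$, $m_{A+1}=0$), the identical telescoping multinomial count $\prod_{j=0}^{A}\binom{i_{j+1}}{i_j}$ over nested admissible profiles, and the same integral representation $t!=\int_0^\infty u^t e^{-u}\,du$ (equivalently $n^{-t}=n\int_0^\infty \frac{y^t}{t!}e^{-ny}\,dy$) converting the tail-probability series into the stated integral. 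The points you flag as delicate --- monotonicity of the good event, the equivalence of the constraints with the nested inequalities $0=i_0\le i_1\le\cdots\le i_{A+1}=n$ with $i_j\ge k_j$, and the Tonelli interchange --- are all handled correctly, so there is no gap.
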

\begin{proof}
Please refer to Appendix \ref{app:coupon_proofs}.
\end{proof}

Same as for Corollary \ref{thm:nonuniform_moments}, we can find
$E[U(\mathbf{m},\mathbf{k})]=\varphi_{U(\mathbf{m},\mathbf{k})}(1)$.
A computationally wieldy representation of $E[U(\mathbf{m},\mathbf{k})]$ is offered in the following Corollary \ref{thm:uniform_expected}
in a recursive form.
\begin{corollary}\label{thm:uniform_expected}
For $k=k_{1}, k_{1}+1, \dots, n$, let
\begin{align*}
&\phi_{0,k}(x)=[(S_{m_0}(x)-S_{m_1}(x))e^{-x}]^{k};
\end{align*}
For $j=1,2,\dots,A$, let
\begin{align*}
&\phi_{j,k}(x)\\
&=\sum_{w=k_j}^{k}{k\choose
w}\left[(S_{m_{j}}(x)-S_{m_{j+1}}(x))e^{-x}\right]^{k-w}\phi_{j-1,w}(x),\\
&\textnormal{for }  k=k_{j+1},k_{j+1}+1,\dots,n.
\end{align*}
Then,
\begin{equation}
E[U(\mathbf{m},\mathbf{k})]=n\int_0^\infty\left(1-\phi_{A,n}(x)\right)dx.\label{eq:uniform_expected}
\end{equation}
\end{corollary}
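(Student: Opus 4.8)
The plan is to start from the generating function in Theorem~\ref{thm:uniform_genfunc} and evaluate it at $z=1$, since $E[U(\mathbf{m},\mathbf{k})]=\varphi_{U(\mathbf{m},\mathbf{k})}(1)$ as noted just before the corollary. Setting $z=1$ replaces every $S_{m_j}(xz)$ by $S_{m_j}(x)$ and collapses $e^{nxz}$ to $e^{nx}$, so that the leading $e^{-nx}e^{nx}=1$ and
\[
E[U(\mathbf{m},\mathbf{k})]=n\int_0^\infty\Bigl\{1-e^{-nx}\!\!\sum_{\text{indices}}\prod_{j=0}^{A}\binom{i_{j+1}}{i_j}\bigl[S_{m_j}(x)-S_{m_{j+1}}(x)\bigr]^{i_{j+1}-i_j}\Bigr\}dx,
\]
where the index set is $i_0=0$, $i_{A+1}=n$, and $i_j\in[k_j,i_{j+1}]$ for $j=1,\dots,A$. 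Comparing with the target formula~\eqref{eq:uniform_expected}, it suffices to prove the single identity that the bracketed subtrahend equals $\phi_{A,n}(x)$.

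The first step toward that identity is to absorb $e^{-nx}$ into the product. Because the exponents telescope, $\sum_{j=0}^{A}(i_{j+1}-i_j)=i_{A+1}-i_0=n$, so $e^{-nx}=\prod_{j=0}^A e^{-(i_{j+1}-i_j)x}$ and hence
\[
e^{-nx}\prod_{j=0}^{A}\bigl[S_{m_j}(x)-S_{m_{j+1}}(x)\bigr]^{i_{j+1}-i_j}=\prod_{j=0}^{A}\bigl[(S_{m_j}(x)-S_{m_{j+1}}(x))e^{-x}\bigr]^{i_{j+1}-i_j}.
\]
Writing $f_j(x)=(S_{m_j}(x)-S_{m_{j+1}}(x))e^{-x}$ for brevity, the goal reduces to showing that $\phi_{A,n}(x)=\sum_{\text{indices}}\prod_{j=0}^{A}\binom{i_{j+1}}{i_j}f_j(x)^{i_{j+1}-i_j}$.

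I would prove this by induction on $j$, treating the outer index $i_{j+1}=k$ as a parameter. Define $\Phi_{j,k}(x)$ to be the sum over the free indices $i_1,\dots,i_j$ (with $i_0=0$ and $i_{j+1}=k$ held fixed, subject to $i_l\in[k_l,i_{l+1}]$) of $\prod_{l=0}^{j}\binom{i_{l+1}}{i_l}f_l(x)^{i_{l+1}-i_l}$. The base case $j=0$ has no free index and gives $\Phi_{0,k}(x)=\binom{k}{0}f_0(x)^k=f_0(x)^k$, which is exactly $\phi_{0,k}(x)$. For the inductive step, I peel off the innermost index $i_j=w$, which ranges over $[k_j,k]$: the $l=j$ factor contributes $\binom{k}{w}f_j(x)^{k-w}$, and the remaining factors, summed over $i_1,\dots,i_{j-1}$, give $\Phi_{j-1,w}(x)=\phi_{j-1,w}(x)$ by the induction hypothesis. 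This reproduces exactly the recursion $\phi_{j,k}(x)=\sum_{w=k_j}^{k}\binom{k}{w}f_j(x)^{k-w}\phi_{j-1,w}(x)$ defining the corollary, so $\Phi_{j,k}=\phi_{j,k}$ for all $j$ and $k$. Taking $j=A$, $k=n$ yields the desired identity and hence~\eqref{eq:uniform_expected}.

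The only real content is the bookkeeping in the inductive step: recognizing that the corollary's recursion is precisely the operation of summing out the innermost variable of the nested sum in Theorem~\ref{thm:uniform_genfunc}, with $k$ playing the role of $i_{j+1}$ and $w$ that of $i_j$. Everything else---the evaluation at $z=1$ and the telescoping of the exponential---is routine. The main thing to get right is the alignment of the index ranges ($i_l\in[k_l,i_{l+1}]$) across levels, so that the constraint $k_j\le w\le k$ matches the summation limits of the recursion at every stage.
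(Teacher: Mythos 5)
Your proof is correct and takes essentially the same route as the paper's own (largely implicit) argument: the paper also obtains (\ref{eq:uniform_expected}) by evaluating the generating function of Theorem~\ref{thm:uniform_genfunc} at $z=1$, folding $e^{-nx}$ into the product via the telescoping exponents, and recognizing the nested sum as exactly what the recursion defining $\phi_{A,n}(x)$ computes. Your explicit induction on $j$ simply spells out the bookkeeping that the paper leaves to the reader as ``one can verify.''
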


It is not hard to find an algorithm that calculates
$1-\phi_{A,n}(x)$ in
$(c_1m_1+c_2(n-1)+c_3\sum_{j=1}^A\sum_{k=k_{j+1}}^n(k-k_j))$ basic
arithmetic operations, where $c_1$, $c_2$ and $c_3$ are positive
constants. As long as $m_1=\mathcal{O}(An^2)$, we can estimate the
amount of work for a single evaluation of $1-\phi_{A,n}(x)$ to be
$\mathcal{O}(An^2)$. The integral (\ref{eq:uniform_expected}) can be computed through the use of
an efficient quadrature method, for example, Gauss-Laguerre
quadrature. For reference, some numerical integration issues for the
special case where $A=1$ have been addressed in Part 7 of
\cite{Flajolet1992207} and in \cite{couponrevisited}.

In Section~\ref{sec:overlapping}, we will apply Corollary
\ref{thm:uniform_expected} to find out the expected throughput of the {\it random annex code},
an overlapping coding scheme in which generations share randomly
chosen information packets. The effect of the overlap size on the throughput can be investigated henceforth.  





\subsection{Limiting Mean Value and Distribution}\label{subsec:coupon_limit}
In the previous subsection, we considered collecting a finite number
of copies of a coupon set of a finite size. In this part, we present
some results from existing literature on the limiting behavior of
$T(\boldsymbol{\rho},\mathbf{m})$ as $n\rightarrow\infty$ or
$m_1=m_2=\dots=m_n=m\rightarrow\infty$, assuming
$\rho_1=\rho_2=\dots=\rho_n=\frac{1}{n}$. By slight abuse in
notation, we denote $T(\boldsymbol{\rho},\mathbf{m})$ here as
$T_n(m)$.

By Corollary \ref{thm:nonuniform_moments},
\begin{equation}
\label{eq:etnm} E[T_n(m)] =
n\int_{0}^{\infty}\left[1-(1-S_m(x)e^{-x})^n\right]dx.
\end{equation}

The asymptotics of $E[T_n(m)]$ for large $n$ has been discussed in
literature \cite{doubledixiecup}, \cite{flatto} and
\cite{couponnewaspects}, and is summarized in the following Theorem
\ref{thm:tnm_largen}, (\ref{eq:tnm_largem}), and Theorem
\ref{thm:tmn_dist}.

\begin{theorem}(\cite{flatto})\label{thm:tnm_largen}
When $n\rightarrow\infty$,
\begin{equation}\label{eq:tnm_largen}
E[T_n(m)]= n\log n+(m-1)n\log\log n+C_m n +o(n),
\end{equation}
 where $C_m=\gamma-\log(m-1)!$, $\gamma$
is Euler's constant, and $m\in\mathbb{N}$.
\end{theorem}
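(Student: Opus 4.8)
The plan is to work directly from the exact integral representation (\ref{eq:etnm}), writing $E[T_n(m)]=nI_n$ with $I_n=\int_0^\infty\bigl[1-(1-g(x))^n\bigr]\,dx$ and $g(x)=S_m(x)e^{-x}$, and to prove the refined estimate $I_n=\log n+(m-1)\log\log n+(\gamma-\log(m-1)!)+o(1)$; multiplying by $n$ then gives (\ref{eq:tnm_largen}). Here $g$ is smooth and strictly decreasing from $g(0)=1$ to $g(\infty)=0$, and from $S_m'=S_{m-1}$ one gets the exact derivative $g'(x)=-\frac{x^{m-1}}{(m-1)!}e^{-x}$, together with the large-$x$ expansion $g(x)=\frac{x^{m-1}}{(m-1)!}e^{-x}\bigl(1+O(1/x)\bigr)$. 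The integrand is a smoothed indicator that sits near $1$ on a long plateau and falls to $0$ across a narrow transition window; the whole task is to locate that window and to evaluate the $O(1)$ correction it produces.

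First I would Poissonize, replacing $(1-g)^n$ by $e^{-ng}$. Using $1-g\le e^{-g}$ and, for small $g$, $(1-g)^n\ge e^{-ng-ng^2}$, one gets $0\le e^{-ng}-(1-g)^n\le ng^2e^{-ng}$ in the transition and tail regions, while both powers are exponentially small where $g$ is bounded away from $0$. Since $ng(x)^2=O(1/n)$ across the window (where $g\sim1/n$) and the window has width $O(1)$, integrating this bound shows $\int_0^\infty\bigl|e^{-ng}-(1-g)^n\bigr|\,dx=o(1)$, hence $I_n=J_n+o(1)$ with $J_n=\int_0^\infty(1-e^{-ng(x)})\,dx$.

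Next I would split $J_n$ at the transition point $x_1$ defined by $ng(x_1)=1$, giving $J_n=x_1-\int_0^{x_1}e^{-ng}\,dx+\int_{x_1}^\infty(1-e^{-ng})\,dx$. Solving $\log g(x_1)=-\log n$ by iteration, using the expansion of $g$, yields $x_1=\log n+(m-1)\log\log n-\log(m-1)!+o(1)$, which already supplies the two leading terms and the $-\log(m-1)!$ piece of the constant. For the two remaining integrals I would substitute $u=ng(x)$; because $g/|g'|=r(x):=\frac{(m-1)!\,S_m(x)}{x^{m-1}}=1+(m-1)/x+O(1/x^2)\to1$, they become $-\int_1^n e^{-u}\frac{r(x(u))}{u}\,du+\int_0^1(1-e^{-u})\frac{r(x(u))}{u}\,du$. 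Over the relevant $u$-ranges $x(u)\approx\log n$, so $r(x(u))=1+O(1/\log n)$ and the correction integrates to $o(1)$, while the main parts combine into $\int_0^1\frac{1-e^{-u}}{u}\,du-\int_1^\infty\frac{e^{-u}}{u}\,du=\gamma$, the classical integral representation of Euler's constant. Collecting the pieces gives $J_n=\log n+(m-1)\log\log n+\gamma-\log(m-1)!+o(1)$, as required.

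The main obstacle I anticipate is making every error term genuinely $o(1)$ rather than merely bounded: controlling the Poissonization error uniformly over the plateau, the window, and the thin tail at once, solving for $x_1$ to additive precision $o(1)$, and checking that the $O(1/x)$ factor in $r(x(u))$ together with the $\log\log n$-level dependence of $x(u)$ on $u$ contributes only $o(1)$ after substitution. The case $m=1$ (where $g=e^{-x}$, $r\equiv1$, and $E[T_n(1)]=nH_n$ exactly, so $H_n=\log n+\gamma+o(1)$ and $C_1=\gamma$) is a convenient sanity check for the whole scheme.
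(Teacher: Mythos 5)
Your proposal is correct, but it proves the theorem by a genuinely different route than the paper, which offers no proof at all: Theorem~\ref{thm:tnm_largen} is quoted from Flatto \cite{flatto}, and in that literature (Erd\H{o}s--R\'enyi \cite{renyi}, Flatto's Lemma B, restated here as Theorem~\ref{thm:tmn_dist}) the constant $C_m=\gamma-\log(m-1)!$ is obtained from the limit law: $Y_n(m)$ converges to a Gumbel variable shifted by $-\log(m-1)!$, whose mean is exactly $C_m$, and the expectation asymptotics follow by integrating the limit distribution (with a uniform-integrability argument). Newman--Shepp \cite{doubledixiecup} got only the first two terms plus $O(n)$ from the integral (\ref{eq:etnm}). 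You instead extract the full constant directly from (\ref{eq:etnm}) by Laplace-type analysis: Poissonization with the bound $0\le e^{-ng}-(1-g)^n\le ng^2e^{-ng}$, location of the transition point $x_1$ with $ng(x_1)=1$ to additive precision $o(1)$, and the substitution $u=ng(x)$ with $dx=-(r(x(u))/u)\,du$, $r=g/|g'|\to 1$, so that the two boundary integrals reassemble into the classical representation $\gamma=\int_0^1\frac{1-e^{-u}}{u}\,du-\int_1^\infty\frac{e^{-u}}{u}\,du$. This buys a self-contained derivation from the paper's own formula (\ref{eq:etnm}) and makes transparent where each term in (\ref{eq:tnm_largen}) comes from ($x_1$ gives $\log n+(m-1)\log\log n-\log(m-1)!$; the window gives $\gamma$); the distributional route buys more, namely Theorem~\ref{thm:tmn_dist} itself with a convergence rate, which the paper needs separately for Theorem~\ref{thm:pe_lb}.

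One step of yours needs a small repair as stated: in the integral $\int_1^n e^{-u}\,r(x(u))\,u^{-1}\,du$ it is not true that $x(u)\approx\log n$ over the whole range, since $x(u)\to 0$ as $u\to n$, where $r(x)\sim(m-1)!/x^{m-1}$ blows up for $m\ge 2$. The fix is routine and of the kind you anticipated: split at $u=\sqrt{n}$; for $u\le\sqrt{n}$ one has $x(u)\gtrsim\tfrac12\log n$, so $r(x(u))=1+O(1/\log n)$ uniformly, while for $u\ge\sqrt{n}$ the identity $r(x(u))u^{-1}\,du=-dx$ gives $\int_{\sqrt n}^{n}e^{-u}r(x(u))u^{-1}\,du\le e^{-\sqrt n}\bigl(x(\sqrt n)-x(n)\bigr)=O\bigl(e^{-\sqrt n}\log n\bigr)=o(1)$. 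With that patch, and your $m=1$ check ($E[T_n(1)]=nH_n$) confirming $C_1=\gamma$, the argument is complete and yields the stated result.
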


For $m\gg1$, on the other hand, we have \cite{doubledixiecup}
\begin{equation}\label{eq:tnm_largem}
E[T_n(m)]\rightarrow nm.
\end{equation}

What is worth mentioning is that, as the number of coupons
$n\rightarrow\infty$, for the first complete set of coupons, the
number of samplings needed is $\mathcal{O}(n\log n)$, whereas the
additional number of samplings needed for each additional set is
only $\mathcal{O}(n\log\log n)$.

In addition to the expected value of $T_n(m)$, the concentration of
$T_n(m)$ around its mean is also of great interest to us. This
concentration leads to an estimate of the probability of successful
decoding for a given number of collected coded packets. We can
specialize Corollary \ref{thm:nonuniform_moments} to derive the
variance of $T_n(m)$, as a measure of probability concentration.

Further, since the tail probability generating functions derived in
the last subsection are power series of non-negative coefficients
and are convergent at $1$, they are absolutely convergent on and
inside the circle $|z|=1$ in the complex $z$-plane. Thus, it is
possible to compute the tail probabilities using Cauchy's contour
integration formula. However, extra care is required for numerical
stability in such computation.

Here we instead look at the asymptotic case where the number of
coupons $n\rightarrow \infty$. Erd\"{o}s and R\'{e}nyi have proven
in \cite{renyi} the limit law of $T_n(m)$ as $n\rightarrow \infty$.
Here we restate Lemma B from \cite{flatto} by Flatto, which in
addition expresses the rate of convergence to the limit law. We will
later use this result to derive a lower bound for the probability of
decoding failure in Theorem \ref{thm:pe_lb} in Section~\ref{subsec:disjoint_pe}.

\begin{theorem}(\cite{flatto})\label{thm:tmn_dist}
Let \[Y_n(m)=\frac{1}{n}\left(T_n(m)-n\log n-(m-1)n\log\log
n\right).\] Then,
\[
\textnormal{Pr}[Y_n(m)\le y] =
\exp\left(-\frac{e^{-y}}{(m-1)!}\right)+\mathcal{O}\left(\frac{\log\log
n}{\log n}\right).
\]
\end{theorem}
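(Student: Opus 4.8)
The plan is to analyze $T_n(m)$ by Poissonization, which decouples the dependent coupon counts and reproduces exactly the product form already appearing in \eqref{eq:etnm}. First I would embed the sampling in a rate-$1$ Poisson process whose arrivals are independent uniform coupons; by the thinning property the arrivals of each individual coupon form independent Poisson processes of rate $1/n$. Let $W_j$ be the time at which coupon $j$ first accumulates $m$ copies, so each $W_j$ is an independent Erlang$(m,1/n)$ variable and the Poissonized completion time is $\tilde T_n(m)=\max_j W_j$. Since the $W_j$ are i.i.d., the Erlang tail gives the closed form
\[
\Pr[\tilde T_n(m)\le t]=\bigl(\Pr[W_1\le t]\bigr)^n=\bigl(1-S_m(t/n)\,e^{-t/n}\bigr)^n,
\]
with $S_m$ as in \eqref{eq:sm_m}.

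Next I would carry out the asymptotics at the scaling of the theorem. Setting $t=n\log n+(m-1)n\log\log n+ny$ and $x=t/n=\log n+(m-1)\log\log n+y$, I would use that the Erlang tail is dominated by its top term, $S_m(x)e^{-x}=\frac{x^{m-1}}{(m-1)!}e^{-x}\bigl(1+O(1/\log n)\bigr)$, together with the expansion $x^{m-1}=(\log n)^{m-1}\bigl(1+O(\log\log n/\log n)\bigr)$. Substituting $e^{-x}=e^{-y}(\log n)^{-(m-1)}/n$ then yields $n\,S_m(x)e^{-x}=\frac{e^{-y}}{(m-1)!}\bigl(1+O(\log\log n/\log n)\bigr)$, so that
\[
\Pr[\tilde T_n(m)\le t]=\Bigl(1-\tfrac{1}{n}\tfrac{e^{-y}}{(m-1)!}\bigl(1+O(\tfrac{\log\log n}{\log n})\bigr)\Bigr)^n\to\exp\!\Bigl(-\tfrac{e^{-y}}{(m-1)!}\Bigr),
\]
the displayed relative error propagating into the exponent and giving the claimed $O(\log\log n/\log n)$ term, while the quadratic contribution from $\log(1-\cdot)$ is only $O(1/n)$.

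Finally I would de-Poissonize to pass from $\tilde T_n(m)$ back to the discrete $T_n(m)$. Writing $N(t)\sim\mathrm{Poisson}(t)$ for the number of arrivals by time $t$ and $F(i)=\Pr[T_n(m)\le i]$ for the (nondecreasing) discrete CDF, conditioning on $N(t)$ yields the exact identity $\Pr[\tilde T_n(m)\le t]=E[F(N(t))]$. Since $t\asymp n\log n$ and $N(t)$ concentrates within $a\asymp\sqrt{n}\,(\log n)^{3/2}$ of its mean with super-polynomially small tail, while the limiting CDF varies on the scale $n$ in $t$ (so $F$ changes by only $O(a/n)$ over a window of width $a$), sandwiching $E[F(N(t))]$ between $F(t-a)$ and $F(t+a)$ shows that $\tilde T_n(m)$ and $T_n(m)$ share the same limit with a de-Poissonization error of order $a/n=o(\log\log n/\log n)$, negligible against the main term.

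The main obstacle is the bookkeeping of error terms rather than the limit itself: I must verify that every approximation—truncating $S_m$ to its leading term, expanding $x^{m-1}$, and de-Poissonizing—contributes an error no larger than $O(\log\log n/\log n)$ and that these combine without a worse rate. The delicate piece is confirming that the $x^{m-1}$ expansion carries the dominant $\log\log n/\log n$ factor, since the $S_m$-truncation error $O(1/\log n)$ and the de-Poissonization error are both of strictly smaller order; this is what pins down the stated rate. A cleaner alternative would be to invoke \cite{flatto} (Lemma~B) directly, but the Poissonization route above reproduces it self-containedly.
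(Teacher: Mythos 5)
Your proof is correct, but note that the paper does not prove this statement at all: Theorem~\ref{thm:tmn_dist} is quoted verbatim as Lemma~B of Flatto \cite{flatto} (the limit law itself going back to Erd\H{o}s and R\'enyi \cite{renyi}), so your Poissonization argument is a genuinely independent, self-contained derivation rather than a variant of anything in the paper. The three steps all hold up: the thinning property does make $\tilde T_n(m)=\max_j W_j$ a maximum of i.i.d.\ Erlang variables with $\Pr[\tilde T_n(m)\le t]=\bigl(1-S_m(t/n)e^{-t/n}\bigr)^n$, mirroring the integrand of (\ref{eq:etnm}); your error bookkeeping correctly identifies the expansion $x^{m-1}=(\log n)^{m-1}\bigl(1+O(\log\log n/\log n)\bigr)$ as the dominant error source, with the $S_m$-truncation ($O(1/\log n)$), the quadratic term in $\log(1-\cdot)$ ($O(1/n)$), and the de-Poissonization window ($O((\log n)^{3/2}/\sqrt{n})$ since $a\asymp\sqrt{t}\log n$ with $t\asymp n\log n$) all of strictly smaller order; and the identity $\Pr[\tilde T_n(m)\le t]=E[F(N(t))]$ is exact because, conditioned on $N(t)=i$, the $i$ arrival types are i.i.d.\ uniform, after which monotonicity of $F$ and Poisson concentration justify the sandwich $\Pr[\tilde T_n(m)\le t-a]-\epsilon\le F(t)\le\Pr[\tilde T_n(m)\le t+a]+\epsilon$ with superpolynomially small $\epsilon$. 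Two points to make explicit if you write this up: the constants in your $O(\cdot)$ terms depend on $y$ through $e^{-y}$, so the bound is uniform only on compact intervals $-a\le y\le a$ --- exactly the caveat recorded in Remark~\ref{rmk:dist} --- and for $m=1$ your route yields a de-Poissonization error of order $(\log n)^{3/2}/\sqrt{n}$, weaker than the $O(\log n/n)$ rate noted in that remark, though still far better than the $O(\log\log n/\log n)$ the theorem asserts. What the citation buys the paper is brevity; what your argument buys is transparency, in particular an explanation of where the product form $1-(1-S_m(x)e^{-x})^n$ inside (\ref{eq:etnm}) comes from.
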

\begin{remark}(Remarks 2\&3, \cite{flatto})\label{rmk:dist}
The estimation in Theorem \ref{thm:tmn_dist} is understood to hold
uniformly on any finite interval $-a\le y\le a$. i.e., for any
$a>0$,
\begin{align*}
\left|\textnormal{Prob}\left[Y_n(m)\le y\right]-\exp\left(-\frac{\exp(-y)}{(m-1)!}\right)\right|\le C(m,a)\frac{\log\log n}{\log n},&
\end{align*}
$n\ge 2$ and $-a\le y\le a$. $C(m,a)$ is a positive constant
depending on $m$ and $a$, but independent of $n$. For $m=1$, the
convergence rate to limit law is much faster: the
$\mathcal{O}\left(\frac{\log\log n}{\log n}\right)$ term becomes
$\mathcal{O}\left(\frac{\log n}{n}\right)$.
\end{remark}


\section{Coding Over Disjoint Generations}\label{sec:nonoverlapping}

In this section, we study the performance of coding over disjoint
generations. We derive both an upper bound and a lower bound for the
expected decoding latency (as defined in Section
\ref{subsec:latency}). We also derive the variance of the decoding
latency.

\subsection{Expected Decoding Latency and Its Variance}\label{subsec:disjoint_moments}
Let $M_i$ $(i=1,2,\dots,n)$ be the number of collected coded packets
from generation $G_i$ when $G_i$ first becomes decodable. Then
$M_i$ is at least $g_i$, has the same distribution as $M(g_i,g_i)$, the number of
coded packets needed for a certain generation to gather enough linearly independent equations for decoding, as defined and studied in Section~\ref{sec:rank}.
$M_i$'s are independent random variables. Let 
the decoding latency over a perfect channel be
$W(\boldsymbol{\rho},\mathbf{g}),$ where
$\mathbf{g}=(g_1,g_2,\dots,g_n).$ Use
$W_{\epsilon}(\boldsymbol{\rho},\mathbf{g})$ to denote the decoding
latency on a BEC($\epsilon$).

Let $X_k$ $(k=1,2,\dots)$ be i.i.d.\ geometric random variables with
success rate $1-\epsilon$. Therefore, $E[X_k]=\frac{1}{1-\epsilon}$ and
$E[X_k^2]=\frac{1+\epsilon}{(1-\epsilon)^2}$. Then
\[W_{\epsilon}(\boldsymbol{\rho},\mathbf{g})=\sum_{i=1}^{W(\boldsymbol{\rho},\mathbf{g})}X_i,\]
and therefore,
\begin{align}
E[W_{\epsilon}(\boldsymbol{\rho},\mathbf{g})]&=\frac{1}{1-\epsilon}E[W(\boldsymbol{\rho},\mathbf{g})],\label{eq:disjoint_epsilon_expected}\\
Var[W_{\epsilon}(\boldsymbol{\rho},\mathbf{g})]&=\frac{1}{(1-\epsilon)^2}\left(Var[W(\boldsymbol{\rho},\mathbf{g})]+\epsilon
E[W^2(\boldsymbol{\rho},\mathbf{g})]\right). \label{eq:disjoint_epsilon_variance}
\end{align}

By definition,
$E[W(\boldsymbol{\rho},\mathbf{g})]$ is lower bounded by
 $E[T(\boldsymbol{\rho},\mathbf{g})]$, the expected number of coded
 packets necessary for collecting at least $g_i$ coded packets for each generation $G_i$, and
$E[T(\boldsymbol{\rho},\mathbf{g})]$ is as given in Corollary~\ref{thm:nonuniform_moments}.

The following Theorem \ref{thm:disjoint_expected} gives the
exact expression for the first and second moments of
$W(\boldsymbol{\rho},\mathbf{g})$, along with an upper bound for
$E[W(\boldsymbol{\rho},\mathbf{g})]$ considering the effect of finite finite field size $q$. Then, the expected value and
the variance of $W_\epsilon(\boldsymbol{\rho},\mathbf{g})$ can be derived from (\ref{eq:disjoint_epsilon_expected}) and (\ref{eq:disjoint_epsilon_variance}).

\begin{theorem}\label{thm:disjoint_expected} The expected number of
coded packets needed for successful decoding of all $N$ information
packets

\begin{align}
E[&W(\boldsymbol{\rho},\mathbf{g})]\notag\\
=&
\int_0^{\infty}\left(1-\prod_{i=1}^{n}\left(1-e^{-\rho_ix}E_{M_i}\left[S_{M_i}(\rho_ix)\right]\right)\right)dx\\
<&\int_0^{\infty}\bigg(1-\prod_{i=1}^{n}\Big(1-e^{-\rho_ix}\big(S_{g_i}(\rho_ix) \label{eq:disjoint_expected_upper} \\
&+\alpha_{q,g_i}q^{g_i}e^{\rho_ix/q}-\alpha_{q,g_i}q^{g_i}S_{g_i}(\rho_ix/q)\big)\Big)\bigg)dx,\notag
\end{align}
\begin{align}
E[&W^2(\boldsymbol{\rho},\mathbf{g})] \label{eq:disjoint_sec_moment}\\
=& 2\int_0^{\infty}x\bigg(1
-\sum_{i=1}^{n}\rho_i\frac{1-E_{M_i}[S_{M_i-1}(\rho_i
x)]e^{-\rho_i x}}{1-E_{M_i}\left[S_{M_i}(\rho_i
x)\right]e^{-\rho_i
x}} \cdot\notag\\
&\cdot\prod_{j=1}^{n}\left(1-E_{M_j}\left[S_{M_j}(\rho_j
x)\right]e^{-\rho_j
x}\right)\bigg)dx   \notag\\
&+\int_0^{\infty}\left(1-\prod_{i=1}^{n}\left(1-e^{-\rho_ix}E_{M_i}[S_{M_i}(\rho_ix)]\right)\right)dx
\notag
\end{align}
where $\alpha_{q,g_i}=-\sum_{k=0}^{g_i-1}\ln\left(1-q^{k-g_i}\right),$ $i=1,2,\dots,n$.
\end{theorem}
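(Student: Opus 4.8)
The plan is to recognize the decoding latency $W(\boldsymbol{\rho},\mathbf{g})$ as an instance of the non-uniform collector's waiting time $T(\boldsymbol{\rho},\mathbf{m})$ of Theorem~\ref{thm:nonuniform_genfunc}, but with a \emph{random} target vector. Concretely, since $M_i$ is the number of packets from $G_i$ needed to make it decodable, and the $M_i$ are independent (each distributed as $M(g_i,g_i)$) and independent of the generation schedule, we have $W(\boldsymbol{\rho},\mathbf{g})=T(\boldsymbol{\rho},\mathbf{M})$ with $\mathbf{M}=(M_1,\dots,M_n)$. I would therefore condition on $\mathbf{M}=\mathbf{m}$, apply the already-proven formulas of Corollary~\ref{thm:nonuniform_moments} and Theorem~\ref{thm:nonuniform_genfunc} for fixed $\mathbf{m}$, and then average over $\mathbf{M}$ by the law of total expectation.

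For the first moment, conditioning gives $E[W\mid\mathbf{M}=\mathbf{m}]=\int_0^\infty\{1-\prod_i[1-S_{m_i}(\rho_i x)e^{-\rho_i x}]\}\,dx$ by Corollary~\ref{thm:nonuniform_moments}. Taking $E_{\mathbf{M}}$, I would interchange expectation and the $x$-integral (justified by Tonelli, the integrand being nonnegative), and then use the independence of the $M_i$ to push the expectation through the product, turning $E_{\mathbf{M}}\prod_i[1-S_{M_i}(\rho_i x)e^{-\rho_i x}]$ into $\prod_i(1-e^{-\rho_i x}E_{M_i}[S_{M_i}(\rho_i x)])$. This yields the exact expression for $E[W(\boldsymbol{\rho},\mathbf{g})]$.

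For the upper bound (\ref{eq:disjoint_expected_upper}), the work is to bound $E_{M_i}[S_{M_i}(\rho_i x)]$ from above. Expanding $S_{M_i}$ by its definition and summing by the tail, $E_{M_i}[S_{M_i}(\rho_i x)]=\sum_{j\ge0}\frac{(\rho_i x)^j}{j!}\mathrm{Prob}[M_i>j]$; since $M_i\ge g_i$ always, the terms $j<g_i$ contribute exactly $S_{g_i}(\rho_i x)$, and for $j\ge g_i$ I would insert the tail bound $\mathrm{Prob}[M(g_i,g_i)>j]<\alpha_{q,g_i}q^{g_i}q^{-j}$ from Claim~\ref{thm:wait_ccdf}. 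The remaining sum $\alpha_{q,g_i}q^{g_i}\sum_{j\ge g_i}(\rho_i x/q)^j/j!$ is recognized as $\alpha_{q,g_i}q^{g_i}(e^{\rho_i x/q}-S_{g_i}(\rho_i x/q))$, which reproduces the bracket in (\ref{eq:disjoint_expected_upper}). Finally, since the integrand $1-\prod_i(1-a_i)$ is increasing in each $a_i=e^{-\rho_i x}E_{M_i}[S_{M_i}(\rho_i x)]\in[0,1]$, replacing $a_i$ by the upper bound preserves the inequality.

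For the second moment I would use the identity $E[T^2]=2\varphi'(1)+\varphi(1)$, where $\varphi=\varphi_{T(\boldsymbol{\rho},\mathbf{m})}$ is the tail generating function of (\ref{eq:nonuniform_genfunc}); this holds because $\varphi(z)=\sum_i\mathrm{Prob}[T>i]z^i$ gives $\varphi(1)=E[T]$ and $2\varphi'(1)=E[T(T-1)]$. Differentiating (\ref{eq:nonuniform_genfunc}) under the integral sign and using $S_m'=S_{m-1}$, evaluation at $z=1$ gives $\varphi'(1)=\int_0^\infty x\{1-\sum_i\rho_i(1-S_{m_i-1}(\rho_i x)e^{-\rho_i x})\prod_{j\ne i}(1-S_{m_j}(\rho_j x)e^{-\rho_j x})\}\,dx$. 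The key step, and the main obstacle, is averaging this over $\mathbf{M}$: in this \emph{product} form each summand contains $M_i$ in a single factor and each $M_j$ with $j\ne i$ in one further factor, so independence lets the expectation factor term by term; re-introducing the missing $i$-th factor as a denominator then produces exactly the ratio-times-product form of (\ref{eq:disjoint_sec_moment}). I expect the delicate points to be (i) legitimizing the differentiation under the integral and the interchange with $E_{\mathbf{M}}$ (handled by dominated convergence once the integrals are checked to converge, which follows from $E[M(g_i,g_i)]<\infty$ via (\ref{eq:hwait})), and (ii) keeping the cancellation of the $i$-th factor against the denominator bookkept correctly so that the expectation distributes across the independent coordinates.
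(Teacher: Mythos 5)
Your proposal follows essentially the same route as the paper's own proof in Appendices \ref{app:disjoint_expected} and \ref{app:coupon_proofs}: condition on $\mathbf{M}=\mathbf{m}$, apply Corollary~\ref{thm:nonuniform_moments} for fixed targets, average over the independent $M_i$ so the expectation distributes through the product, expand $E_{M_i}[S_{M_i}(\rho_i x)]=\sum_{j\ge0}\frac{(\rho_i x)^j}{j!}\mathrm{Prob}[M_i>j]$ and insert the tail bound of Claim~\ref{thm:wait_ccdf} for $j\ge g_i$ to get (\ref{eq:disjoint_expected_upper}), and use $E[T^2]=2\varphi'(1)+\varphi(1)$ with the differentiated generating function (the ratio-times-product form arising exactly as you describe) for (\ref{eq:disjoint_sec_moment}). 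The proof is correct, and your explicit attention to the interchange of expectation and integration and to the monotonicity of the integrand in each factor only makes rigorous what the paper leaves implicit.
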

\begin{proof}
Please refer to Appendix \ref{app:disjoint_expected}.
\end{proof}

In the case where generations are of equal size and scheduled
uniformly at random, we can estimate the asymptotic lower bound for
$E[W(\boldsymbol{\rho},\mathbf{g})]$ by the asymptotics of $T_n(m)$
given in (\ref{eq:tnm_largen}) and (\ref{eq:tnm_largem}).

\begin{figure}[htbp]
\begin{center}
\subfigure[]{\label{subfig:T}\includegraphics[scale=0.5]{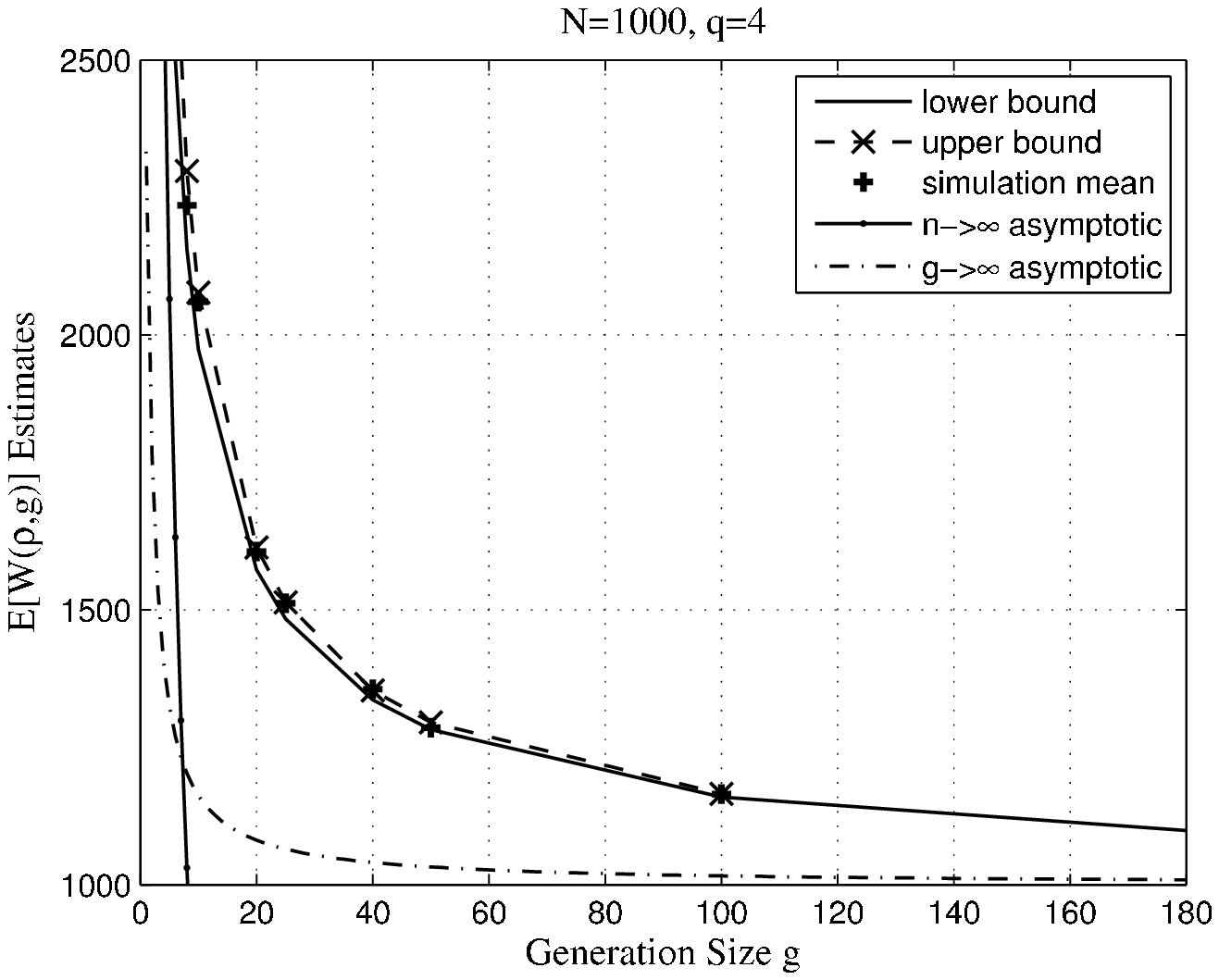}}\qquad
\subfigure[]{\label{subfig:std}\includegraphics[scale=0.5]{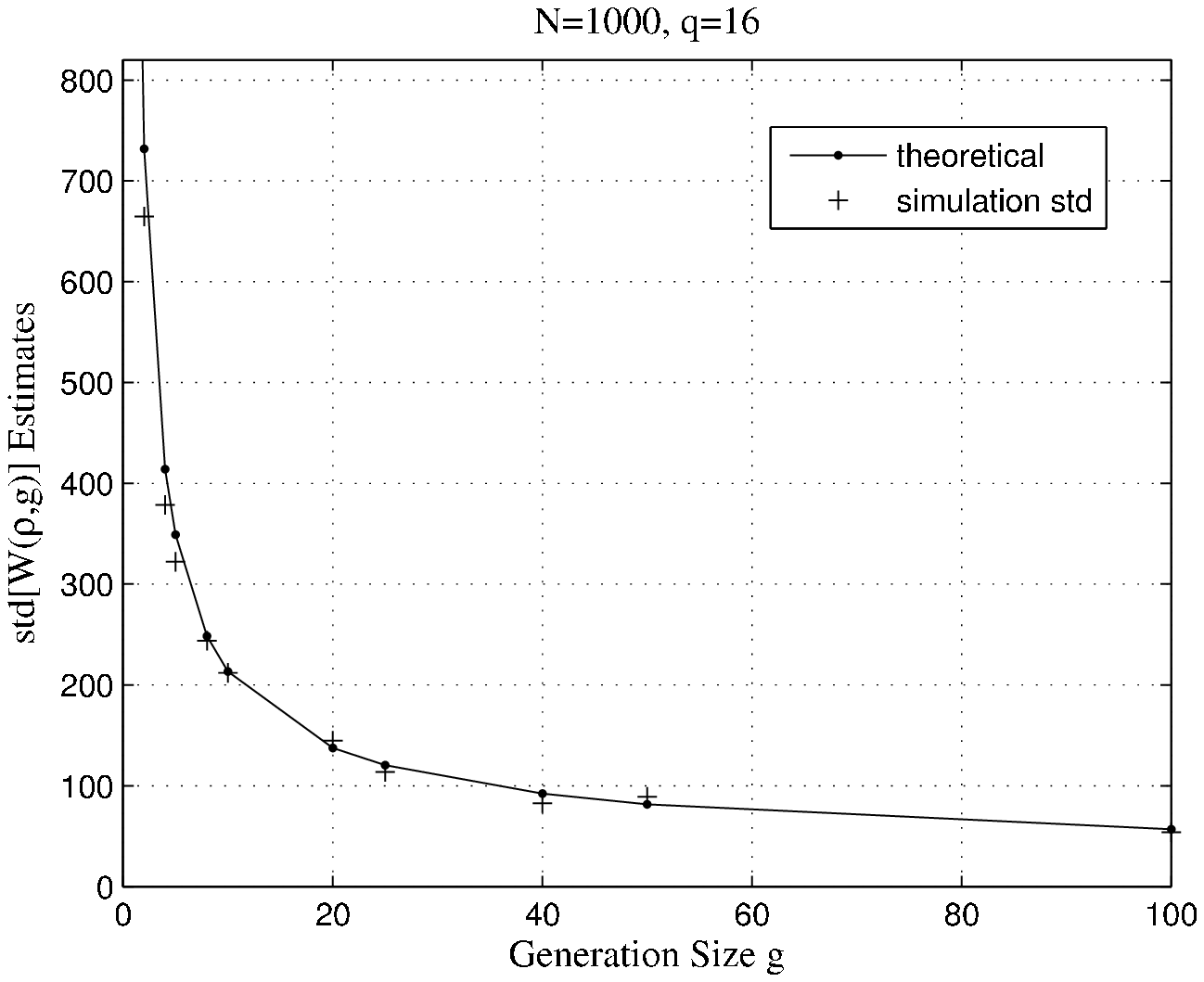}}\qquad
\subfigure[]{\label{subfig:pe}\includegraphics[scale=0.5]{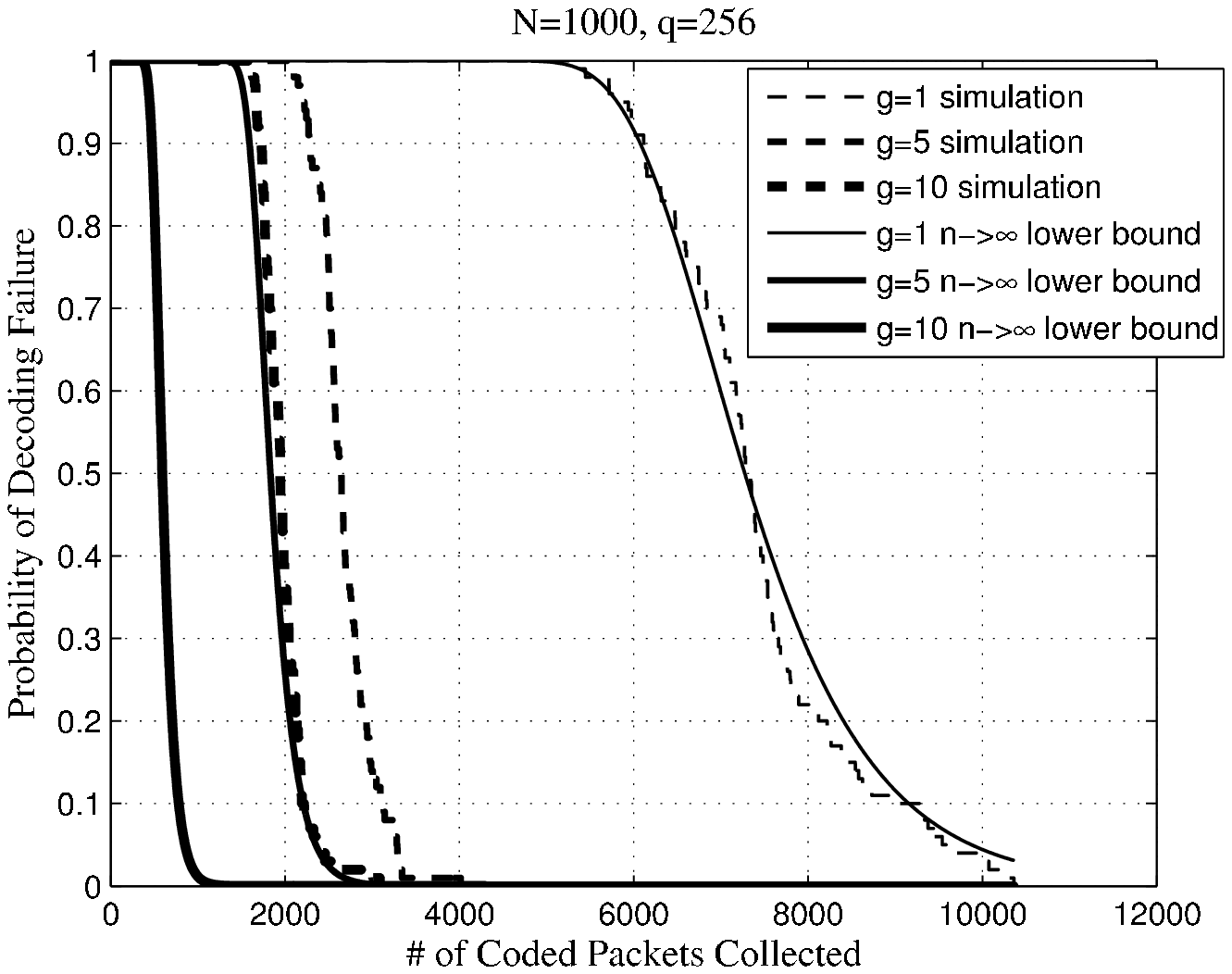}}
\caption{(a) Estimates of $E[W(\boldsymbol{\rho},\mathbf{g})]$, the
expected number of coded packets required for successful decoding
when the total number of information packets is $N=1000$, and both
$\mathbf{g}$ and $\boldsymbol{\rho}$ are uniform. Estimates shown:
lower bound $E[T(\boldsymbol{\rho},\mathbf{g})]$; upper bound
(\ref{eq:disjoint_expected_upper}); mean of
$W(\boldsymbol{\rho},\mathbf{g})$ in simulation;
$n\rightarrow\infty$ asymptotic (\ref{eq:tnm_largen}); $m\gg1$
asymptotics (\ref{eq:tnm_largem}); (b) Estimates of the standard
deviation of $W(\boldsymbol{\rho},\mathbf{g})$; (c) Estimates of
probability of decoding failure versus the number of coded packets
collected: Theorem \ref{thm:pe_lb} along with simulation results.}
\label{fig:brotherhood}
\end{center}
\end{figure}


Figure~\ref{fig:brotherhood}\subref{subfig:T} shows several
estimates of $E[W(\boldsymbol{\rho},\mathbf{g})]$, and Figure~\ref{fig:brotherhood}\subref{subfig:std} shows the standard deviation of $W(\boldsymbol{\rho},\mathbf{g})$ calculated from Theorem \ref{thm:disjoint_expected} and simulation results, when
$\rho_i=\frac{1}{n}$ and $g_i=g$ for $i=1,2,\dots,n$. The estimates
are plotted versus the uniform generation size $g$ for fixed
$N=ng=1000.$

For coding over disjoint generations and a fixed total number of information packets,
both the expected value and the standard deviation of the decoding latency 
drop significantly as the generation size $g$ grows to a relatively small
value from the case where no coding is used ($g=1$). Hence, throughput is improved by a moderate increase in the computational cost that scales quadratically with the generation size (see Section \ref{subsec:complexity}). On the other hand, we
also observe that past a moderate generation size ($\sim50-100$ coded packets for $N=1000$), the decrease in decoding latency becomes slower by further increasing the encoding/decoding complexity. We therefore argue for a ``sweet spot'' generation size which characterizes the tradeoff between throughput and complexity.





\subsection{Probability of Decoding
Failure}\label{subsec:disjoint_pe} In this subsection we assume
uniform generation size and scheduling probability, i.e.,
$\rho_i=\frac{1}{n}$, $g_i=g$ for $i=1,2,\dots,n$. For short, we
denote $W(\boldsymbol\rho,\mathbf{g})$ as $W_n(g)$. From Theorem
\ref{thm:tmn_dist}, we obtain the following lower bound to the
probability of decoding failure as $n\rightarrow\infty$:
\begin{theorem}\label{thm:pe_lb}
When $n\rightarrow\infty$, the probability of decoding failure when
$t$ coded packets have been collected is greater than
$1-\exp\left[-\frac{1}{(g-1)!}n(\log
n)^{g-1}\exp\left(-\frac{t}{n}\right)\right]+\mathcal{O}\left(\frac{\log\log
n}{\log n}\right)$.
\end{theorem}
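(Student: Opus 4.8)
The plan is to bound the probability of decoding failure from below by the probability that the pure coupon-collection target $T_n(g)$ has not yet been met, and then to invoke the limit law of Theorem~\ref{thm:tmn_dist}. The crucial structural observation is that, for disjoint generations, generation $G_i$ becomes decodable only after it has accumulated $M_i\ge g$ coded packets, where $M_i$ is distributed as $M(g,g)$ from Section~\ref{sec:rank}. Since at least $g$ packets per generation are needed in every realization, in the very same sequence of uniform generation draws the event ``all generations are decodable'' cannot occur before the event ``every generation has received at least $g$ packets.'' Hence, pathwise, $W_n(g)\ge T_n(g)$, and therefore
\[
\textnormal{Prob}[W_n(g)>t]\ge \textnormal{Prob}[T_n(g)>t].
\]
This reduces the theorem to estimating the right tail of $T_n(g)$.

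Next I would translate the tail of $T_n(g)$ into the centered variable $Y_n(g)$ of Theorem~\ref{thm:tmn_dist}. Setting $m=g$ and
\[
y=\frac{t}{n}-\log n-(g-1)\log\log n,
\]
the event $\{T_n(g)>t\}$ is exactly $\{Y_n(g)>y\}$, so that $\textnormal{Prob}[T_n(g)>t]=1-\textnormal{Prob}[Y_n(g)\le y]$. Applying the limit law gives $\textnormal{Prob}[Y_n(g)\le y]=\exp\!\big(-e^{-y}/(g-1)!\big)+\mathcal{O}(\log\log n/\log n)$. It remains to substitute the chosen $y$ and simplify the exponent: since $e^{-y}=e^{-t/n}\,e^{\log n}\,e^{(g-1)\log\log n}=n(\log n)^{g-1}e^{-t/n}$, we obtain
\[
\textnormal{Prob}[T_n(g)>t]=1-\exp\!\left[-\frac{n(\log n)^{g-1}}{(g-1)!}e^{-t/n}\right]+\mathcal{O}\!\left(\frac{\log\log n}{\log n}\right),
\]
which is precisely the claimed lower bound.

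The step I expect to require the most care is the justification of the pathwise domination $W_n(g)\ge T_n(g)$ together with the legitimacy of applying the asymptotic law at the specific scaling $y=y(n,t)$. For the domination, one must argue that the decodability stopping time is never smaller than the $g$-copy collection time under a common coupling of the generation draws; this is immediate once one notes that $M_i\ge g$ holds deterministically, regardless of the random coding vectors, because $g$ linearly independent equations cannot be formed from fewer than $g$ packets. For the asymptotics, the estimate must hold at the $n$-dependent argument $y$; by Remark~\ref{rmk:dist} the convergence is uniform on any finite interval $-a\le y\le a$, so the $\mathcal{O}(\log\log n/\log n)$ error is valid precisely when $t$ scales so that $y$ stays bounded, i.e.\ $t=n\log n+(g-1)n\log\log n+\Theta(n)$. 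This is exactly the regime of interest (overhead linear in $n$), and the theorem is to be read in that scaling.
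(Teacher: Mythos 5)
Your proof is correct and takes essentially the same route as the paper's: lower-bound the failure probability via the pathwise domination $W_n(g)\ge T_n(g)$, then rewrite $\{T_n(g)>t\}$ in terms of $Y_n(g)$ at $y=\frac{t}{n}-\log n-(g-1)\log\log n$ and apply Theorem~\ref{thm:tmn_dist}, with $e^{-y}=n(\log n)^{g-1}e^{-t/n}$. Your closing observation that the $\mathcal{O}(\log\log n/\log n)$ error is justified only when $y$ stays in a bounded interval (by Remark~\ref{rmk:dist}, i.e.\ $t=n\log n+(g-1)n\log\log n+\Theta(n)$) is a careful point the paper leaves implicit.
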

\begin{proof}
The probability of decoding failure after acquiring $t$ coded
packets equals $\textnormal{Prob}[W_n(g)>t]$. Since $W_n(g)\ge
T_n(g)$,
\begin{align*}
&\textnormal{Prob}[W_n(g)>t] \ge \,\textnormal{Prob}[T_n(g)>t]\\
&=1-\!\textnormal{Prob}\left[Y_n(g)\le\frac{t}{n}-\log
n-(g-1)\log\log n\right].
\end{align*}
The result in Theorem \ref{thm:pe_lb} follows directly from Theorem
\ref{thm:tmn_dist}.
\end{proof}

\begin{corollary}\label{thm:overhead}
When $g$ is fixed and $n\rightarrow\infty$, in order to make the probability of decoding failure
smaller than $\delta$, the number of coded packets collected
has to be at least $E[T_n(g)]-n\log\log\frac{1}{1-\delta}$. If
$\delta=\frac{1}{N^c}$ for some constant $c$, then the number of
coded packets necessary for successful decoding has to be at least
$E[T_n(g)]+cn\log(ng)$.
\end{corollary}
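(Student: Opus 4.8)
The plan is to extract a necessary condition on the number $t$ of collected coded packets directly from the lower bound on the decoding-failure probability proved in Theorem~\ref{thm:pe_lb}. Since $\textnormal{Prob}[W_n(g)>t]$ is at least the expression displayed there, and since any quantity squeezed below $\delta$ must in particular have its lower bound below $\delta$, requiring the \emph{true} failure probability to be smaller than $\delta$ forces that lower bound below $\delta$ as well. Inverting the resulting inequality in $t$ then produces the claimed threshold; working from Theorem~\ref{thm:pe_lb} is equivalent to working from the limit law in Theorem~\ref{thm:tmn_dist}, on which it is based.

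First I would write the requirement
$$1-\exp\left[-\frac{1}{(g-1)!}n(\log n)^{g-1}\exp\left(-\frac{t}{n}\right)\right]<\delta,$$
discarding the $\mathcal{O}(\log\log n/\log n)$ remainder as we are after the leading asymptotic estimate. Rearranging gives $\exp[\cdots]>1-\delta$, and taking logarithms twice peels off the double exponential: the inner logarithm yields $\frac{1}{(g-1)!}n(\log n)^{g-1}e^{-t/n}<\log\frac{1}{1-\delta}$, and solving for $e^{-t/n}$ followed by a second logarithm gives
$$t> n\log n+(g-1)n\log\log n-n\log(g-1)!-n\log\log\frac{1}{1-\delta}.$$

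Next I would identify the first three terms on the right with $E[T_n(g)]$. By Theorem~\ref{thm:tnm_largen}, $E[T_n(g)]=n\log n+(g-1)n\log\log n+C_g n+o(n)$ with $C_g=\gamma-\log(g-1)!$, so $n\log n+(g-1)n\log\log n-n\log(g-1)!=E[T_n(g)]-\gamma n+o(n)$; to leading order the $\Theta(n)$ constant $\gamma n$ and the $o(n)$ remainder are absorbed, yielding the first assertion $t\gtrsim E[T_n(g)]-n\log\log\frac{1}{1-\delta}$. For the second assertion I substitute $\delta=N^{-c}$ and use $-\log(1-\delta)=\delta+O(\delta^2)\sim N^{-c}$ for large $N$, so that $\log\log\frac{1}{1-\delta}\sim\log(N^{-c})=-c\log N=-c\log(ng)$ (recalling $N=ng$). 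Hence $-n\log\log\frac{1}{1-\delta}\sim cn\log(ng)$, which produces the stated bound $E[T_n(g)]+cn\log(ng)$.

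The routine part is the double inversion of the Gumbel-type tail; the only real care needed is the bookkeeping of which correction terms survive. In the first part the dropped $\gamma n$ term is of the same $\Theta(n)$ order as the retained $-n\log\log\frac{1}{1-\delta}$, so that statement should be read as a leading-order estimate rather than a strict inequality. In the second part this ceases to matter: because $\log(ng)\to\infty$, the correction $cn\log(ng)=\Theta(n\log n)$ dominates every $\Theta(n)$ term, so the constants and the $o(n)$ error are genuinely negligible and the estimate is asymptotically tight. I would also flag that the approximation $-\log(1-\delta)\approx\delta$ is exactly what turns a target failure probability $\delta=1/N^c$ into an additive $c\log N$ overhead per generation.
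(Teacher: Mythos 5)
Your derivation is correct and is exactly the route the paper intends (the corollary is stated without proof, but it follows by inverting the Gumbel-type bound of Theorem~\ref{thm:pe_lb} and substituting the expansion of $E[T_n(g)]$ from Theorem~\ref{thm:tnm_largen}, precisely as you do). Your explicit bookkeeping of the absorbed $\gamma n$ term---negligible only in the second claim, where $cn\log(ng)=\Theta(n\log n)$ dominates---is a point the paper glosses over, and you handle it properly.
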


Theorem 4.2 in \cite{petarchunked} also gives the number of coded packets
needed to have the probability of decoding failure below $\delta=\frac{1}{N^c}$, but under the assumption that
$\ln(N/\delta)=o(N/n)=o(g)$. In comparison, Corollary \ref{thm:overhead} treats the case where $g$ is constant.

Figure~\ref{fig:brotherhood}\subref{subfig:pe} shows the estimate of
the probability of decoding failure versus $T$, the number of coded
packets collected. As pointed out in Remark \ref{rmk:dist}, for
$m\ge2$, the deviation of the CDF of $T_n(m)$ from the limit law for
$n\rightarrow\infty$ depends on $m$ and is on the order of
$\mathcal{O}({\frac{\log\log n}{\log n}})$ for $m\ge2$, which is
quite slow, partly explaining the deviation of the limit law curves from the
simulation curves for $m=5$ and $m=10$ in Figure~\ref{fig:brotherhood}\subref{subfig:pe}.



\section{Coding Over Overlapping Generations}\label{sec:overlapping}


Even when generations are scheduled
uniformly at random, there will be more coded packets accumulated in
some of the generations than in others. The ``slowest'' generation is the bottleneck
for file decoding.
It is then advisable to design a
mechanism that allows ``faster'' generations to help those lagging
behind. In this section, we propose the {\it random annex code}, a
new coding scheme in which generations share randomly chosen
packets, as opposed to previously proposed ``head-to-toe'' overlapping scheme of
\cite{carletonoverlap}.

We provide a heuristic
analysis of the code throughput based on our results for the coupon collection model and
an examination of the overlapping structure. Previous work on coding over
overlapping generations, \cite{queensoverlap} and
\cite{carletonoverlap}, lacks accurate performance analysis for
information blocks of moderate finite lengths. On the other hand,
the computational effort needed to carry out our analysis scales
well with the length of information, and the performance predictions
coincide with simulation data. In addition, we find that our random
annex code outperforms the ``head-to-toe'' overlapping scheme of
\cite{carletonoverlap} over a unicast link.

In this section we conveniently assume that the coded packets are
sent over a perfect channel, since here we are interested in
comparing the performance of different rateless coding schemes.

\subsection{Forming Overlapping Generations}
We form $n$ overlapping generations out of a file with $N$ information packets in two steps as follows:
\begin{enumerate}
  \item Partition the file set $\mathcal{F}$ of $N$ packets into subsets $B_1,B_2, \dots, B_n$, each
   containing $h$ consecutive packets. These $n=N/h$
  subsets are referred to as \textit{base generations}.
Thus, $B_i=\{p_{(i-1)h+1},p_{(i-1)h+2},\dots,p_{ih}\}$ for
$i=1,2,\dots,n$. $N$ is assumed to be a multiple of $h$ for
convenience. In practice, if $N$ is not a multiple of $h$,  set
$n=\lceil N/h \rceil$ and assign the last $[N-(n-1)h]$ packets to
the last (smaller) base generation.
  \item To each base generation $B_i$, add a random \emph{annex} $R_i$, consisting of
$l$ packets chosen uniformly at random (without replacement) from
the $N-h=(n-1)h$ packets in $\mathcal{F}\backslash B_i$. The base
generation together with its annex constitutes the \emph{extended
generation} $G_i=B_i\cup R_i$, the size of which is $g=h+l$.
Throughout this paper, unless otherwise stated, the term
``generation'' will refer to ``extended generation'' whenever used
alone for overlapping generations.
\end{enumerate}

The generation scheduling probabilities are chosen to be uniform,
$\rho_1=\rho_2=\dots=\rho_n=1/n$. The encoding and decoding procedures run the
same as described in the general model in Section
\ref{sec:gen_model}.

\subsection{Analyzing the Overlapping Structure}\label{subsec:overlap_structure}
The following Claims \ref{thm:pi} through \ref{thm:overlap_gennum}
present combinatorial derivations of quantities concerning the frequency at which an arbitrary
information packet is represented in different generations.

\begin{claim}\label{thm:pi}
For any packet in a base generation $B_{k}$, the probability that it
belongs to annex $R_r$ for some $r\in\{1,2,\dots,n\}\backslash\{k\}$
is
\[\pi={{N-h-1}\choose{l-1}}/{{N-h}\choose{l}}=\frac{l}{N-h}=\frac{l}{(n-1)h},\]
whereas the probability that it does not belong to $R_r$ is $\bar{\pi}
= 1-\pi.$
\end{claim}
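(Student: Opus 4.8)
The claim asks for the probability that a fixed packet $p$ lying in base generation $B_k$ is included in the annex $R_r$ of some other generation $G_r$, where $r \neq k$. By construction, $R_r$ is a size-$l$ subset chosen uniformly at random (without replacement) from the $N-h = (n-1)h$ packets in $\mathcal{F} \setminus B_r$. Since $p \in B_k$ and $k \neq r$, we have $p \notin B_r$, so $p$ is one of the $(n-1)h$ candidate packets eligible to appear in $R_r$. The plan is therefore a direct uniform-sampling argument: count the fraction of size-$l$ subsets of a size-$(n-1)h$ ground set that contain the fixed element $p$.

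**The key computation.** First I would observe that the total number of possible annexes $R_r$ is $\binom{N-h}{l} = \binom{(n-1)h}{l}$, all equally likely. Among these, the subsets that contain $p$ are exactly those obtained by fixing $p$ as one member and choosing the remaining $l-1$ members freely from the other $(N-h)-1$ candidates; there are $\binom{N-h-1}{l-1}$ such subsets. Hence
\[
\pi = \frac{\binom{N-h-1}{l-1}}{\binom{N-h}{l}}.
\]
Expanding the binomial coefficients and cancelling gives
\[
\pi = \frac{(N-h-1)!/[(l-1)!\,(N-h-l)!]}{(N-h)!/[l!\,(N-h-l)!]} = \frac{l}{N-h} = \frac{l}{(n-1)h},
\]
where the last equality uses $N = nh$. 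The complementary probability is then immediately $\bar\pi = 1 - \pi$.

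**Where care is needed.** There is no real analytic obstacle here; the result is a standard hypergeometric/uniform-subset fact, and the arithmetic is routine. The one point that genuinely requires attention is the \emph{eligibility} check: one must confirm that $p$ is actually a candidate for inclusion in $R_r$, i.e.\ that $p \in \mathcal{F} \setminus B_r$. This holds precisely because the statement restricts to $r \neq k$, so that $p \in B_k$ guarantees $p \notin B_r$. (If one had $r = k$, the packet would lie in the base generation itself and be excluded from the annex pool, so the formula would not apply — which is exactly why the claim quantifies over $r \in \{1,\dots,n\}\setminus\{k\}$.) Once this membership is verified, the count of favorable outcomes over total outcomes yields the stated $\pi$, and symmetry across the $(n-1)h$ eligible packets confirms that each is equally likely to be chosen, consistent with the $\frac{l}{N-h}$ expression.
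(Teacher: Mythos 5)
Your proposal is correct and follows exactly the argument the paper intends: the claim's own statement displays the ratio $\binom{N-h-1}{l-1}/\binom{N-h}{l}$, which is precisely your count of $l$-subsets of the $(N-h)$ eligible packets containing the fixed packet $p$, and the paper leaves the simplification to $l/(N-h)=l/((n-1)h)$ as the same routine cancellation you carry out. Your explicit eligibility check that $p\notin B_r$ for $r\neq k$ is a sensible addition but does not change the substance; no gap to report.
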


\begin{claim}\label{thm:representation}
Let $X$ be the random variable representing the number of
generations an information packet participates in. Then, $X=1+Y,$
where $Y$ is ${\rm Binom}(n-1,\pi).$
\[E[X]=1+(n-1)\pi=1+\frac{l}{h},\] and \[Var[X]=(n-1)\pi\bar{\pi}.\]
\end{claim}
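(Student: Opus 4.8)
The plan is to fix an arbitrary information packet $p$ and decompose its generation membership into a deterministic part and a random part. Since the base generations $B_1,\dots,B_n$ partition $\mathcal{F}$, the packet $p$ lies in exactly one base generation, say $B_k$. Consequently $p\in G_k=B_k\cup R_k$ with certainty, which accounts for the deterministic summand $1$ in $X=1+Y$. For every other index $i\neq k$ we have $p\notin B_i$ by disjointness of the base generations, so $p$ participates in $G_i=B_i\cup R_i$ if and only if $p\in R_i$. Writing $\mathbf{1}_i$ for the indicator of the event $\{p\in R_i\}$, I would therefore set $Y=\sum_{i\neq k}\mathbf{1}_i$, so that $X=1+Y$ holds by construction.

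Next I would identify the distribution of $Y$. Claim~\ref{thm:pi} already supplies $\mathrm{Prob}[\mathbf{1}_i=1]=\pi=\frac{l}{(n-1)h}$ for each of the $n-1$ indices $i\neq k$. The key remaining point is that these $n-1$ indicators are mutually independent: the annex $R_i$ attached to $B_i$ is sampled uniformly at random from $\mathcal{F}\backslash B_i$ independently of the annexes attached to the other base generations, so the events $\{p\in R_i\}$ factor across $i$. Hence $Y$ is a sum of $n-1$ independent $\mathrm{Bernoulli}(\pi)$ variables, i.e.\ $Y\sim\mathrm{Binom}(n-1,\pi)$.

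With the distribution in hand, the stated moments follow from the standard binomial formulas. I would compute $E[X]=1+E[Y]=1+(n-1)\pi$ and substitute $\pi=\frac{l}{(n-1)h}$ to obtain $E[X]=1+\frac{l}{h}$; since shifting by the constant $1$ does not affect the spread, $Var[X]=Var[Y]=(n-1)\pi(1-\pi)=(n-1)\pi\bar{\pi}$.

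The only step requiring genuine care is the independence claim, which is what upgrades the single-annex marginal probability $\pi$ of Claim~\ref{thm:pi} into a true binomial law; everything else is bookkeeping. This hinges entirely on the modeling assumption that annexes are drawn independently across generations (the ``without replacement'' in the construction applies only within a single annex, not across annexes), so I would state that assumption explicitly before invoking it.
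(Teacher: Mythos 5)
Your proof is correct and follows exactly the route the paper intends: the claim is stated there without an explicit proof, as an immediate consequence of the construction (each annex $R_i$ drawn independently and uniformly from $\mathcal{F}\backslash B_i$) together with the per-annex membership probability $\pi$ of Claim~\ref{thm:pi}, which is precisely your decomposition $X=1+\sum_{i\neq k}\mathbf{1}_i$ with independent indicators. Your explicit flagging of the cross-annex independence assumption is a sound addition, since that is indeed the only point where the binomial law could fail.
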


\begin{claim} \label{thm:multiplerepresentation}
In each generation of size $g=h+l$, the expected number of information packets
not participating in any other generation is
$h\bar{\pi}^{(n-1)}\approx he^{-l/h}$ for $n\gg1$; the expected
number of information packets participating in at least two generations is \[l+h[1-\bar{\pi}^{(n-1)}]\approx
l+h\left[1-e^{-l/h}\right]<\min\{g,2l\}\] for $n\gg1$ and $l>0$.
\end{claim}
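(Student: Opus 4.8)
The plan is to sort the $g=h+l$ packets of a fixed extended generation $G_k=B_k\cup R_k$ by whether they sit in the base $B_k$ or in the annex $R_k$, and then to count in expectation how many of them belong to no generation other than $G_k$. First I would note that each of the $l$ annex packets necessarily participates in at least two generations: by construction an annex packet is drawn from $\mathcal{F}\backslash B_k$, so it is a member of some base generation $B_j$ with $j\ne k$, and hence lies in both $G_j$ and $G_k$. Therefore the only packets of $G_k$ that could possibly belong to $G_k$ alone are the $h$ packets of the base $B_k$.

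Next I would apply Claim~\ref{thm:representation} to a single base packet $p\in B_k$. That claim gives $X=1+Y$ with $Y\sim{\rm Binom}(n-1,\pi)$, where the $1$ accounts for the home generation $G_k$ and $Y$ counts the other annexes into which $p$ happens to be drawn. Thus $p$ participates in $G_k$ and in no other generation precisely when $Y=0$, an event of probability $\bar{\pi}^{\,n-1}$. Summing over the $h$ base packets by linearity of expectation gives $h\bar{\pi}^{\,n-1}$ as the expected number of packets of $G_k$ appearing in no other generation; substituting $\pi=l/((n-1)h)$ from Claim~\ref{thm:pi} and using $(1-l/((n-1)h))^{n-1}\to e^{-l/h}$ as $n\to\infty$ yields the stated approximation $h\bar{\pi}^{\,n-1}\approx he^{-l/h}$.

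For the second count I would pass to the complement within $G_k$: a packet of $G_k$ participates in exactly one generation if and only if it is one of the solitary base packets just counted, so the expected number participating in at least two generations is $g-h\bar{\pi}^{\,n-1}=l+h(1-\bar{\pi}^{\,n-1})$, with limiting value $l+h(1-e^{-l/h})$. It then remains to check the two inequalities packed into $<\min\{g,2l\}$. The bound below $g=h+l$ is immediate, since $e^{-l/h}>0$ forces $h(1-e^{-l/h})<h$. The bound below $2l$ reduces to $h(1-e^{-l/h})<l$, i.e.\ $e^{-l/h}>1-l/h$, which is the elementary strict inequality $e^{-t}>1-t$ applied at $t=l/h>0$.

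The argument is largely bookkeeping once Claims~\ref{thm:pi} and~\ref{thm:representation} are in hand, so I do not expect a serious obstacle. The one point that needs care is the observation that annex packets are never solitary, which is what lets the complement count be carried entirely by the $h$ base packets; the only genuinely analytic input is the convexity bound $e^{x}>1+x$ used for the comparison with $2l$.
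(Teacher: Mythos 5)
Your proof is correct and follows the derivation the paper intends: the paper states this claim without an explicit proof, as an immediate consequence of Claims~\ref{thm:pi} and~\ref{thm:representation}, and your argument (annex packets are never solitary, base packets are solitary with probability $\bar{\pi}^{\,n-1}$ by independence of the annex draws, complement count plus $e^{-t}>1-t$ for the $2l$ bound) is exactly that bookkeeping, carried out carefully.
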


\begin{claim}\label{thm:overlap_gennum}
The probability that two generations overlap is
$1-{{N-2h}\choose {l,l,N-2h-2l}}/{{N-h}\choose l}^2$. The number of
generations overlapping with any one generation $G_i$ is then $${\rm
Binom}\left(n-1,\left[1-{{N-2h}\choose {l,l,N-2h-2l}}/{{N-h}\choose
l}^2\right]\right).$$
\end{claim}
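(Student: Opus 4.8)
The plan is to reduce the pairwise non-overlap probability to a counting problem over the two annexes, and then to express the number of overlapping generations as a sum of Bernoulli indicators. First I would fix two distinct indices $i\neq j$ and expand $G_i\cap G_j=(B_i\cup R_i)\cap(B_j\cup R_j)$ by distributivity into the four pairwise intersections
\[
G_i\cap G_j=(B_i\cap B_j)\cup(B_i\cap R_j)\cup(R_i\cap B_j)\cup(R_i\cap R_j).
\]
Because the base generations $B_1,\dots,B_n$ partition $\mathcal{F}$, the term $B_i\cap B_j$ is empty, so $G_i$ and $G_j$ are disjoint if and only if the remaining three intersections vanish simultaneously, i.e.\ $R_j\cap B_i=\emptyset$, $R_i\cap B_j=\emptyset$, and $R_i\cap R_j=\emptyset$.

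The key step is the count. Write $C=\mathcal{F}\setminus(B_i\cup B_j)$, so that $|C|=N-2h$. Since $R_i$ is drawn from $\mathcal{F}\setminus B_i$, the condition $R_i\cap B_j=\emptyset$ is equivalent to $R_i\subseteq C$, and likewise $R_j\cap B_i=\emptyset$ forces $R_j\subseteq C$. The annexes $R_i$ and $R_j$ are chosen independently and uniformly among the $\binom{N-h}{l}$ size-$l$ subsets of their respective $(n-1)h=N-h$ element ground sets, so the ordered pair $(R_i,R_j)$ is uniform over $\binom{N-h}{l}^2$ equally likely outcomes. The favorable (non-overlapping) outcomes are exactly the ordered pairs of \emph{disjoint} $l$-subsets of $C$; choosing $R_i$ first and then $R_j$ from the remaining elements counts these as
\[
\binom{N-2h}{l}\binom{N-2h-l}{l}=\binom{N-2h}{l,\,l,\,N-2h-2l}.
\]
Dividing by $\binom{N-h}{l}^2$ gives the non-overlap probability, and subtracting from $1$ yields the stated overlap probability, which I will call $p$.

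For the second statement I would fix $G_i$ and consider the indicators $\xi_j=\mathbf{1}[G_i\cap G_j\neq\emptyset]$ for the $n-1$ other generations. By the symmetry of the random-annex construction each $\xi_j$ is Bernoulli with the same parameter $p$ computed above, and the number of generations overlapping $G_i$ is $\sum_{j\neq i}\xi_j$; modeling these $n-1$ trials as independent then gives the $\mathrm{Binom}(n-1,p)$ law. The main obstacle, and the point at which the statement should be read as the intended approximation underlying the heuristic analysis, is that the $\xi_j$ are \emph{not} mutually independent: every $\xi_j$ is a function of the shared annex $R_i$, which couples the events. The computation above shows each marginal is exactly $p$, so the Binomial is exact in every one-dimensional marginal and serves as the natural independence approximation for the joint count; quantifying the residual dependence, for instance through the covariance of $\xi_j$ and $\xi_{j'}$ induced by the common $R_i$, is the part that would require the most care in a fully rigorous treatment.
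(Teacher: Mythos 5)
Your counting argument is correct and is exactly the derivation the paper intends: Claim~\ref{thm:overlap_gennum} is stated without an explicit proof, as one of several direct combinatorial consequences of the random-annex construction, and your reduction of pairwise non-overlap to the three conditions $R_i\cap B_j=\emptyset$, $R_j\cap B_i=\emptyset$, $R_i\cap R_j=\emptyset$, hence to choosing an ordered pair of disjoint $l$-subsets of the $(N-2h)$-element set $C$, gives precisely the favorable count $\binom{N-2h}{l}\binom{N-2h-l}{l}=\binom{N-2h}{l,\,l,\,N-2h-2l}$ out of $\binom{N-h}{l}^2$ equally likely pairs. Your closing caveat is also well taken and goes slightly beyond the paper: since every indicator $\xi_j$ is a function of the shared annex $R_i$ (conditionally on $R_i$, non-overlap with $G_j$ requires $R_i\cap B_j=\emptyset$, an event whose indicator varies with $R_i$), the $\xi_j$ are positively coupled, so the $\mathrm{Binom}\left(n-1,p\right)$ statement is exact only in its one-dimensional marginals and should be read as the natural independence approximation. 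This is consistent with the paper's own framing of its overlap analysis as heuristic; note also that Claim~\ref{thm:overlap_gennum} is not an input to the latency algorithm, which rests on the expected-overlap quantity $\Omega(s)$ of Theorem~\ref{thm:union_overlap} instead.
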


The following Theorem \ref{thm:union_overlap} quantifies the
expected amount of help a generation may receive from previously
decoded generations in terms of common information packets. In the next
subsection, we use Corollary \ref{thm:uniform_expected}
and Theorem \ref{thm:union_overlap} for a heuristic analysis of the expected
throughput performance of the random annex code.
\begin{theorem} \label{thm:union_overlap}
For any $I\subset\{1,2,\dots,n\}$ with $|I|=s$, and any $j\in
\{1,2,\dots,n\}\backslash I$,
\begin{align}
\Omega(s)=E[|\left(\cup_{i\in I} G_i\right)\cap G_j|]&= g\cdot
\left[1-\bar{\pi}^s\right] +  s h \cdot \pi \bar{\pi}^s
\label{eq:union_overlap}\end{align}
 where $|B|$ denotes the cardinality of
set $B$. When $n\rightarrow\infty$, if
$\frac{l}{h}\rightarrow\alpha$ and $\frac{s}{n}\rightarrow\beta$,
and let $\omega(\beta)=\Omega(s)$, then $\omega(\beta)\rightarrow
h\left[(1+\alpha) \left(1 - e^{-\alpha\beta}\right) +\alpha\beta
e^{-\alpha\beta}\right]$.
\end{theorem}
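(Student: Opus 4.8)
The plan is to compute $\Omega(s)$ by linearity of expectation, writing
\[
\Omega(s)=E\Big[\big|\big(\cup_{i\in I}G_i\big)\cap G_j\big|\Big]=\sum_{p}\textnormal{Prob}\big[p\in G_j \text{ and } p\in\cup_{i\in I}G_i\big],
\]
where the sum runs over all $N$ information packets. Since every packet lies in exactly one base generation, I would partition the packets according to which base generation contains them, into three classes: (i) the $h$ packets of $B_j$; (ii) the $sh$ packets of $\cup_{k\in I}B_k$; and (iii) the remaining $(n-1-s)h$ packets whose base generation is neither $j$ nor any index in $I$.

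The key probabilistic input is that the annexes $R_1,\dots,R_n$ are drawn independently, so for a fixed packet $p$ and any collection of distinct indices none of which is $p$'s base generation, the membership events $\{p\in R_i\}$ are mutually independent, each of probability $\pi$ by Claim~\ref{thm:pi}. With this I would evaluate each class: a packet of $B_j$ is automatically in $G_j$ and lies in the union iff it falls in at least one of the $s$ annexes $R_i$, $i\in I$, contributing $1-\bar\pi^{s}$; a packet whose base is some $k\in I$ is automatically in the union (via $B_k\subseteq G_k$) and lies in $G_j$ iff it is drawn into $R_j$, contributing $\pi$; a packet of class (iii) lies in $G_j$ iff it is in $R_j$ and in the union iff it falls in at least one $R_i$, $i\in I$, which are independent events, contributing $\pi(1-\bar\pi^{s})$. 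Summing gives
\[
\Omega(s)=h(1-\bar\pi^{s})+sh\pi+(n-1-s)h\pi(1-\bar\pi^{s}),
\]
and substituting the identity $(n-1)h\pi=l$ from Claim~\ref{thm:pi} and collecting terms yields the closed form $g(1-\bar\pi^{s})+sh\pi\bar\pi^{s}$.

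For the asymptotic statement I would use $\pi=\frac{l}{(n-1)h}\to0$ together with $s\pi=\frac{s}{n-1}\cdot\frac{l}{h}\to\alpha\beta$, so that $\bar\pi^{s}=(1-\pi)^{s}\to e^{-\alpha\beta}$; combined with $g=h(1+l/h)\to h(1+\alpha)$ and $sh\pi\to h\alpha\beta$, the two summands converge to $h(1+\alpha)(1-e^{-\alpha\beta})$ and $h\alpha\beta e^{-\alpha\beta}$ respectively, giving $\omega(\beta)$. The only real subtlety — and the step I would handle most carefully — is the case split itself: recognizing that packets based in $I$ join the union for free (and so contribute $\pi$, not $\pi(1-\bar\pi^{s})$), while invoking the mutual independence of distinct annex draws to justify the product form $\bar\pi^{s}$ for the other classes; everything else is bookkeeping and the elementary limit $(1-\pi)^{s}\to e^{-\alpha\beta}$.
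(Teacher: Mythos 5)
Your proposal is correct and takes essentially the same route as the paper: your three packet classes are in exact correspondence with the paper's disjoint decomposition $\mathcal{G}_s\cap G_{s+1}=\mathcal{B}_s\cap R_{s+1}+\mathcal{R}_s\cap B_{s+1}+(\mathcal{R}_s\setminus\mathcal{B}_s)\cap R_{s+1}$, producing the same three expectations $sh\pi$, $h\left[1-\bar{\pi}^s\right]$, and $(n-s-1)h\pi\left[1-\bar{\pi}^s\right]$ before collecting terms via $(n-1)h\pi=l$. Your asymptotic step, using $s\pi\rightarrow\alpha\beta$ so that $(1-\pi)^s\rightarrow e^{-\alpha\beta}$, likewise matches the paper's limit computation.
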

\begin{proof}
Please refer to Appendix \ref{app:union_overlap}.
\end{proof}

\subsection{Expected Throughput Analysis: The Algorithm}\label{subsec:overlap_algorithm}
Given the overlapping structure, we next describe an analysis of the
expected number of coded packets a receiver needs to collect in
order to decode all $N$ information packets of $\mathcal{F}$ when
they are encoded by the random annex code. We base our analysis on
Theorem \ref{thm:union_overlap} above, Corollary
\ref{thm:uniform_expected} in Section \ref{sec:coupon}, and also
(\ref{eq:wait_eta}) in Section \ref{sec:rank}, and use the mean
value for every quantity involved.

By the time when $s$ $(s=0,1,\dots,n-1)$ generations have been decoded,
for any one of the remaining $(n-s)$ generations, on the average
$\Omega(s)$ of its participating information packets have been
decoded, or equivalently, $(g-\Omega(s))$ of them are not yet resolved. If for
any one of these remaining generations the receiver has collected enough
coded packets to decode
its unresolved packets, that generation becomes the $(s+1)$th
decoded; otherwise, if no such generation exists, decoding
fails.

The quantity $\eta_g(x)$ defined in (\ref{eq:wait_eta}) in Section~\ref{sec:rank}
estimates the number of coded packets from a generation of size $g$
adequate for collecting $x$ linearly independent equations.
By extending the domain of $\eta_g(x)$ from integers to real numbers,
we can estimate that the number of coded packets needed for the
$(s+1)$th decoded generation should exceed $m'_s =\lceil
\eta_g(g-\Omega(s))\rceil$. Since in the random annex code, all
generations are randomly scheduled with equal probability, for
successful decoding, we would like to have at least $m'_0$ coded
packets belonging to one of the generations, at least $m'_{1}$
belonging to another, and so on. Then Corollary~\ref{thm:uniform_expected} in
Section~\ref{sec:coupon} can be applied to estimate the total
number of coded packets needed to achieve these minimum requirements
for the numbers of coded packets.

The algorithm for our heuristic analysis is listed as follows:
\begin{enumerate}
\item Compute $\Omega(s-1)$ for $s=1,\dots,n$ using Theorem~\ref{thm:union_overlap};
\item Compute $m_{s}^{\prime}=\lceil \eta_g(g-\Omega(s-1))\rceil$ for
$s=1,2,\dots,n$ using (\ref{eq:wait_eta});
\item \label{enum:cat}Map $m_s^{\prime}$ $(s=1,2,\dots,n)$ into $A$ values $m_j$ $(j=1,2,\dots,A)$ so that
$m_j=m_{k_{j-1}+1}^{\prime}=m_{k_{j-1}+2}^{\prime}=\dots=m_{k_{j}}^{\prime}$,
for $j=1,2,\dots,A$, $k_0=0$ and $k_A=n$;
\item Evaluate (\ref{eq:uniform_expected}) in Corollary \ref{thm:uniform_expected} with
the $A$, $k_j$s, and $m_j$s obtained in Step \ref{enum:cat}), as an
estimate for the expected number of coded packets needed for
successful decoding.
\end{enumerate}
\begin{remark}
The above Step \ref{enum:cat}) is viable because $\Omega(s)$ is
nondecreasing in $s$,  $\eta_g(x)$ is non-decreasing in $x$ for
fixed $g$, and thus $m_s^{\prime}$ is non-increasing in $s$.
\end{remark}

Although our analysis is heuristic, we will see in the next section
that the estimate closely follows the simulated average performance
curve of the random annex coding scheme.

\subsection{Numerical Evaluation and Simulation Results}\label{subsec:overlap_results}
\subsubsection{Throughput vs.\ Complexity in Fixed Number of
Generations Schemes} Our goal here is to find out how the annex size
$l$ affects the decoding latency of the scheme with fixed base
generation size $h$ and the total number of information packets $N$
(and consequently, the number of generations $n$). Note that the generation size
$g=h+l$ affects the computational complexity of the scheme, and hence we are actually looking at
the tradeoff between throughput and complexity.

Figure \ref{fig:fixnh} shows both the analytical and simulation results
when the total number $N$ of information packets is $1000$ and the
base generation size $h$ is $25$. Figure~\ref{fig:fixnh}\subref{subfig:fixhdist}
shows $h+l-\Omega(s)$ for
$s=0,1,\dots,n$ with different annex sizes. Recall that $\Omega(s)$
is the expected size of the overlap of the union of $s$ generations
with any one of the leftover $n-s$ generations. After the decoding
of $s$ generations, for any generation not yet decoded, the expected
number of information packets that still need to be resolved is then
$h+l-\Omega(s)$. We observe that the $h+l-\Omega(s)$ curves start
from $h+l$ for $s=0$ and gradually descends, ending somewhere above
$h-l$, for $s=n-1$.

Recall that we measure throughput by decoding latency (Section~\ref{subsec:latency}).
Figure~\ref{fig:fixnh}\subref{subfig:fixhexp} shows the expected
performance of the random annex code, along with the performance of the head-to-toe
overlapping code and the non-overlapping code ($l=0$).
Figure~\ref{fig:fixnh}\subref{subfig:fixhpe} shows the probability of
decoding failure of these codes versus the number of coded packets
collected.


\begin{figure}[h]
\begin{center}
\subfigure[]{\label{subfig:fixhdist}\includegraphics[scale=0.45]{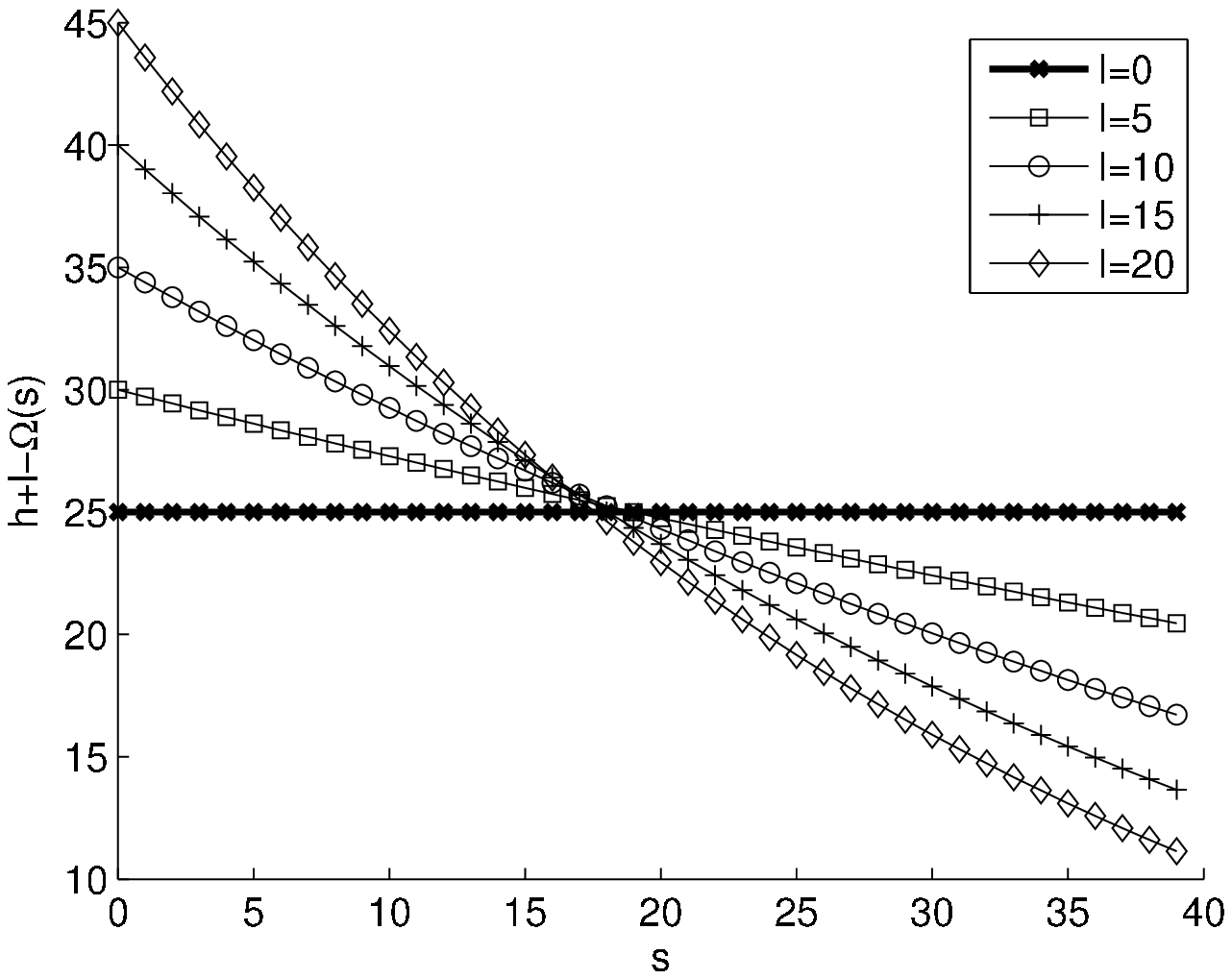}}\qquad
\subfigure[]{\label{subfig:fixhexp}\includegraphics[scale=0.45]{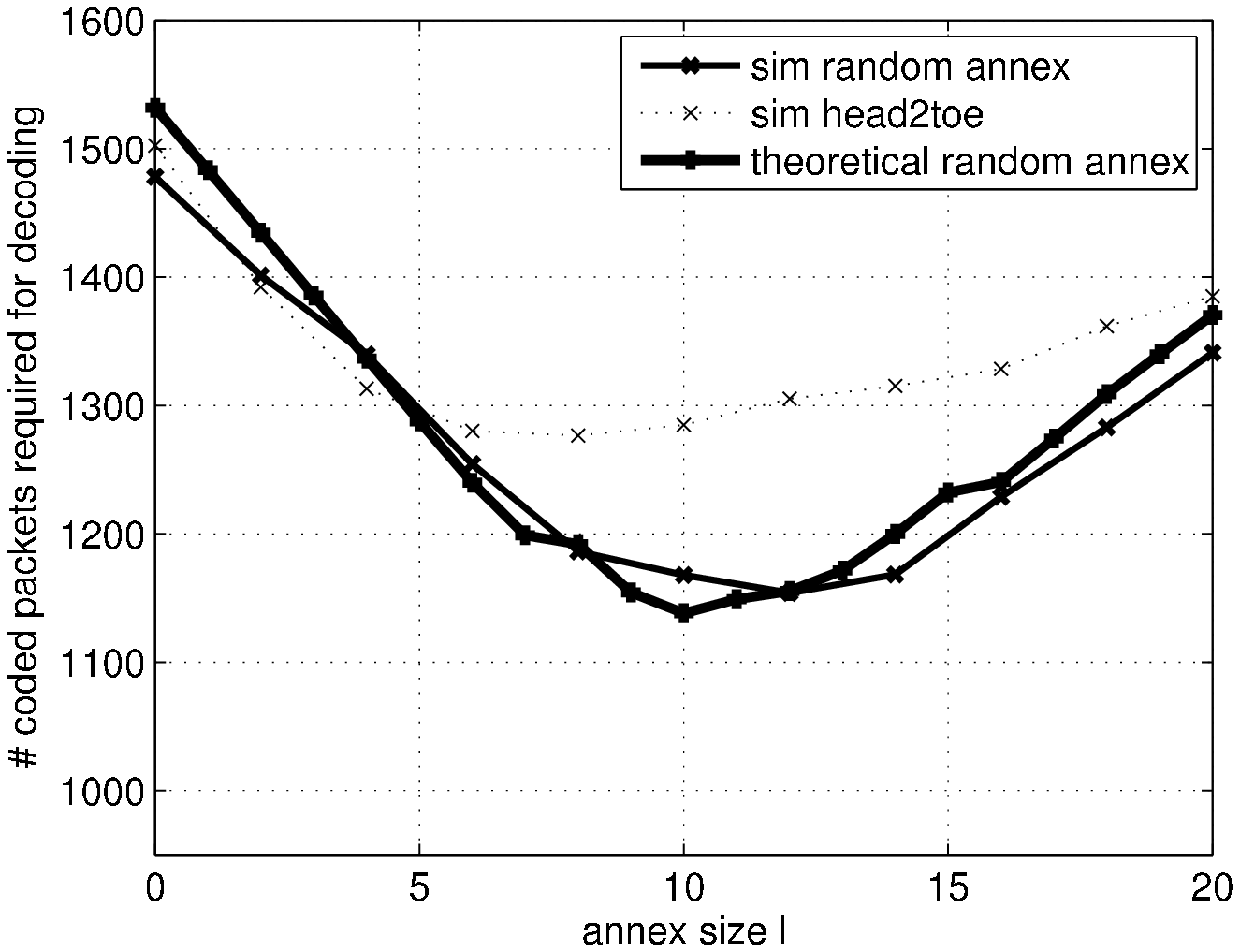}}\qquad
\subfigure[]{\label{subfig:fixhpe}\includegraphics[scale=0.45]{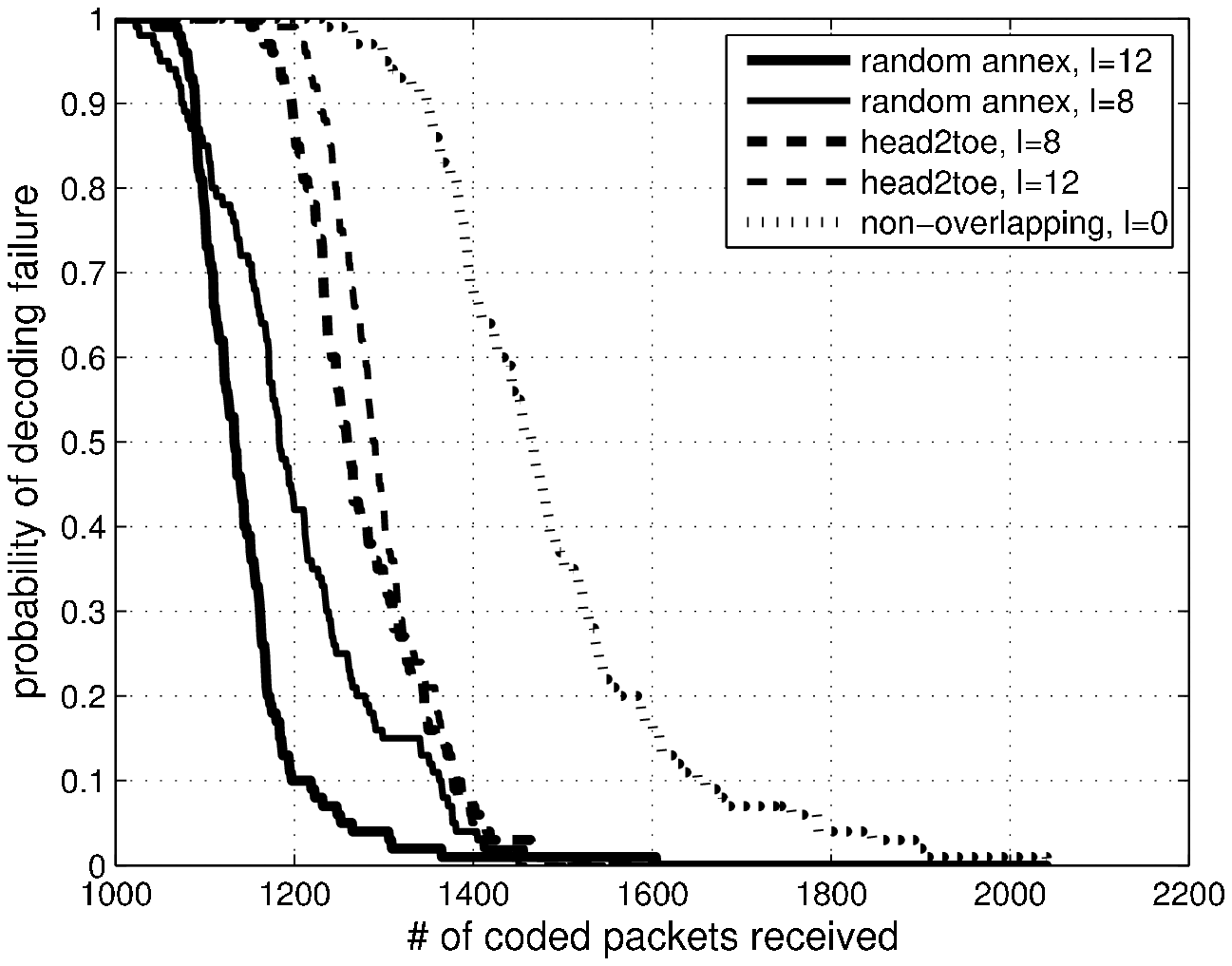}}
\caption{$N=1000$, $h=25$, $q=256$: \subref{subfig:fixhdist}
Difference between the generation size and the expected size of overlap with previously decoded generations $(h+l-\Omega(s))$;
\subref{subfig:fixhexp} Expected number of coded packets needed for
successful decoding versus annex size $l$; \subref{subfig:fixhpe}
Probability of decoding failure }\label{fig:fixnh}
\end{center}
\end{figure}

\begin{itemize}
\item Our analysis for the expected decoding latency closely matches the simulation
results.
\item Figure \ref{fig:fixnh}\subref{subfig:fixhexp} shows that by fixing the file size $N$ and the base generation size $h$, the expected decoding latency decreases roughly linearly with increasing annex size $l$,
up to $l=12$ for the random annex scheme and up to $l=8$
for the head-to-toe scheme. Meanwhile, the decoding cost per information packet
is quadratic in $g=h+l$. Beyond the optimal annex size,
throughput cannot be further increased by raising computational cost.
\item The random annex code outperforms head-to-toe overlapping at
their respective optimal points. Both codes outperform the
non-overlapping scheme.
\item As more coded packets are collected, the probability of decoding
failure of the random annex code converges to $0$ faster than that of the head-to-toe and that of the non-overlapping
scheme.
\end{itemize}

Overlaps provide a tradeoff between computational complexity and decoding latency.

\subsubsection{Enhancing Throughput in Fixed Complexity Schemes}
Our goal here is to see if we can choose the annex size to optimize
the throughput with negligible sacrifice in complexity. To this end,
we fix the extended generation size $g=h+l$ and vary only the annex
size $l$. Consequently, the computational complexity for coding does not increase when $l$ increases.
Actually, since some of the information packets in a generation of size $g$ could already be solved while decoding other generations, the remaining information packets in this generation can be solved in a system of linear equations of fewer than $g$ unknowns, and as a result increasing $l$ might decrease the decoding complexity.

Figure \ref{fig:fixgN} shows both the analytical and simulation results
for the code performance when the total number $N$ of information
packets is fixed at $1000$ and size $g$ of extended generation fixed
at $25$.
\begin{figure}[h]
\begin{center}
\subfigure[]{\label{subfig:fixgexp}\includegraphics[scale=0.45]{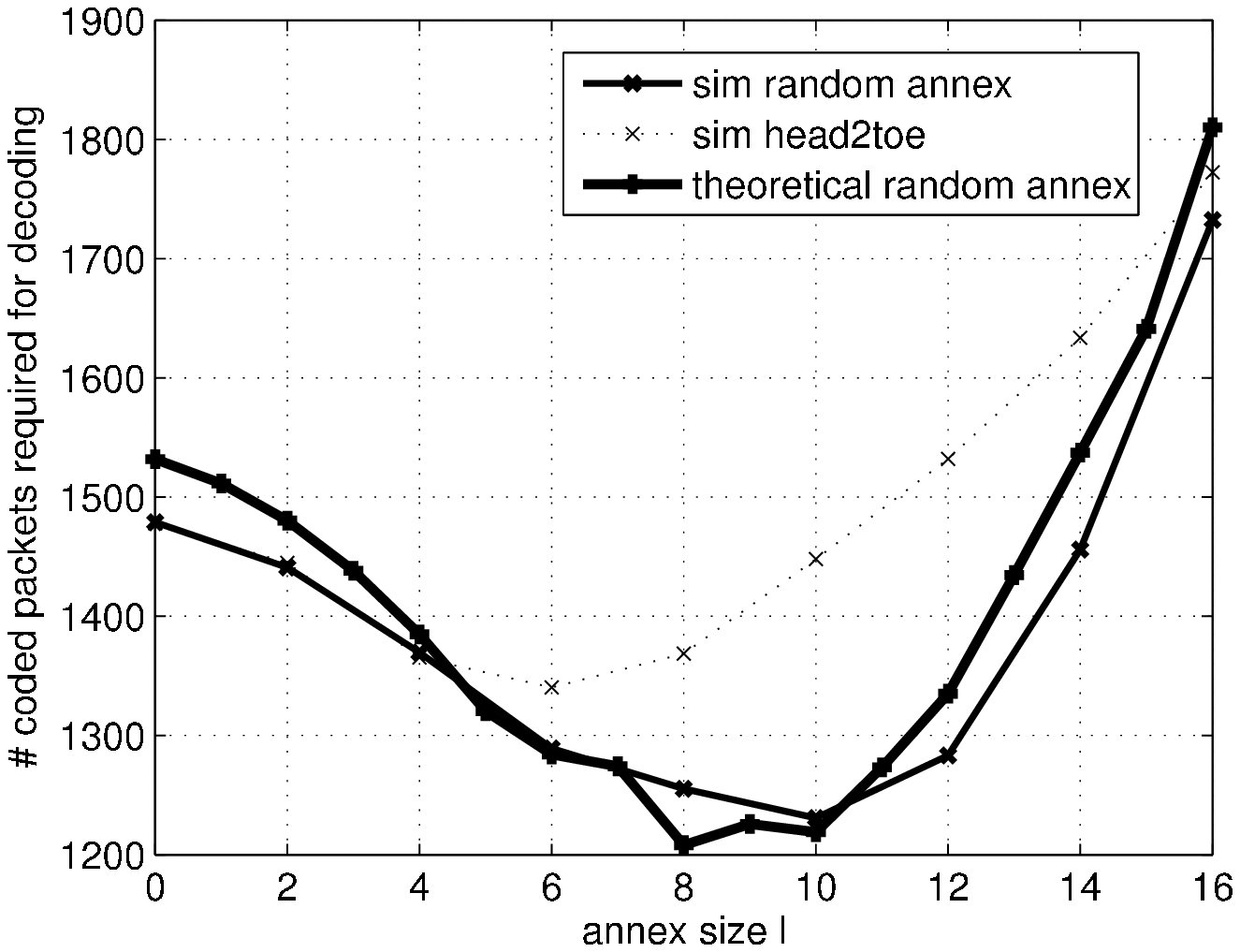}}\qquad
\subfigure[]{\label{subfig:fixgpe}\includegraphics[scale=0.45]{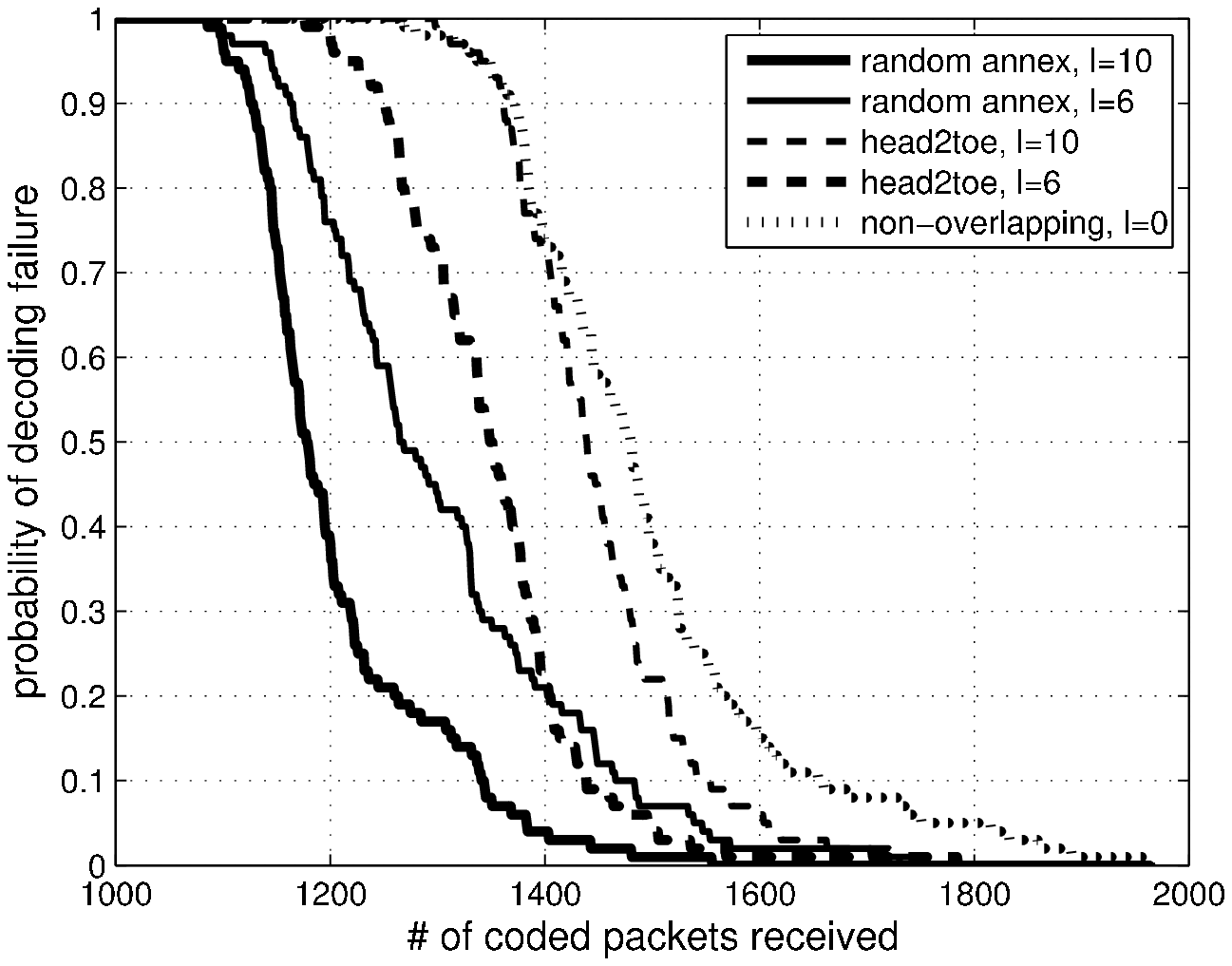}}
\caption{$N=1000$, $g=h+l=25$, $q=256$: \subref{subfig:fixgexp} Expected
number of coded packets needed for successful decoding versus annex
size $l$; \subref{subfig:fixgpe} Probability of decoding failure
}\label{fig:fixgN}
\end{center}
\end{figure}

\begin{itemize}
\item Again our analytical results agree with simulation results very well;
\item It is interesting to observe that, without raising computational complexity, increasing annex size properly can still give non-negligible improvement to throughput;
\item Figure
\ref{fig:fixgN}\subref{subfig:fixgexp} shows a roughly linear improvement of throughput with increasing $l$, up to $l=10$ for the random annex scheme and up to $l=6$ for the
head-to-toe scheme. Increasing $l$ beyond affects throughput adversely;
\item The random annex code again outperforms head-to-toe overlapping at
their optimal points. Both codes outperform the non-overlapping
scheme;
\item We again observe that the probability of decoding
failure of the random annex code converges faster than those of the
head-to-toe and the non-overlapping schemes.
\end{itemize}

When the overlap size increases, we either have larger generations with unchanged number of generations,
or a larger number of generations with unchanged generation size. In both cases the decoding latency would increase if we neglected the effect of overlaps during the decoding process. If we make use of the overlap in decoding, on the other hand, the larger the overlap size, the more help the generations can lend to each other in decoding and, hence, reducing the decoding latency. Two canceling effects result in a non-monotonic relationship between throughput and overlap size.

\begin{figure}[h]
\begin{center}
\includegraphics[scale=0.7]{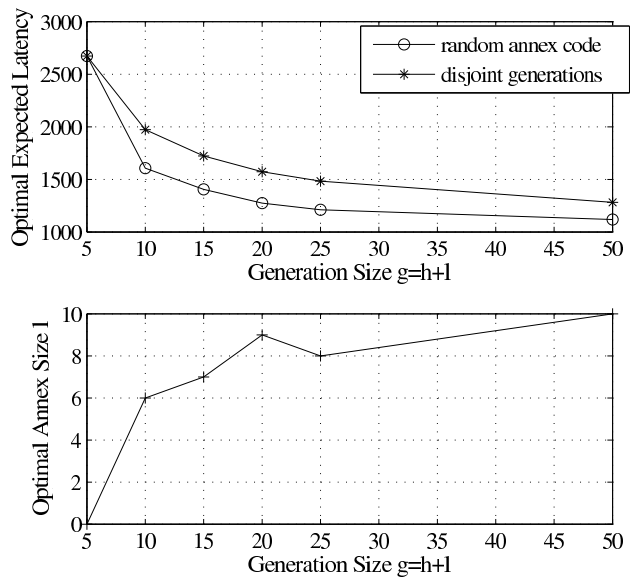}
\caption{Optimal expected decoding latency and the optimal overlap size with random annex codes. $N=1000$, $q=16$
}\label{fig:bestlat}
\end{center}
\end{figure}

The effect of generation size on the throughput of random annex codes is further illustrated in Figure \ref{fig:bestlat}. Figure \ref{fig:bestlat} plots the optimal expected decoding latency
achievable by random annex codes and the corresponding optimal annex size versus the generation size for $N=1000$ and $q=16.$ The plotted values are calculated
using the algorithm listed in Section \ref{subsec:overlap_algorithm}. We can see from Figure \ref{fig:bestlat} that with the random annex code and a generation size of $20$, the expected throughput is better than what can be achieved with coding over disjoint generations and a generation size of $50$. The reduction in computational complexity is considerable. Capturing the optimal overlap size in terms of other parameters of the code is our object of interest in the future.

\appendices

\section{Proof of Claim~\ref{thm:wait_ccdf}} \label{app:rank}\noindent For
$i=1,2,\dots,n$ and any $s\ge g$, we have
\begin{align*}
\ln\textnormal{Prob}&\bigl\{M(g,g)\le s\bigr\}   \\
 =&\ln\prod_{k=0}^{h-1}(1-q^{k-s})=\sum_{k=0}^{g-1}\ln(1-q^{k-s})\\
 =&-\sum_{k=0}^{g-1}\sum_{j=1}^{\infty}\frac{1}{j}q^{(k-s)j} =-\sum_{j=1}^{\infty}\frac{1}{j}\sum_{k=0}^{g-1}q^{j(k-s)}\\
 =&-\sum_{j=1}^{\infty}\frac{1}{j}q^{-js}\frac{q^{jg}-1}{q^{j}-1}\\
 =&-q^{-(s-g)}\sum_{j=1}^{\infty}\frac{1}{j}q^{-(j-1)(s-g)}\frac{1-q^{-jg}}{q^{j}-1}\\
 >&q^{-(s-g)}\sum_{j=1}^{\infty}\frac{1}{j}\frac{1-q^{-jg}}{1-q^{j}}\\
 =&q^{-(s-g)}\ln \textnormal{Prob}\bigl\{M(g,g)\le g\bigr\}\\
 >&q^{-(s-g)}\lim_{h\rightarrow\infty,q=2}\ln \textnormal{Prob}\bigl\{M(g,g)\le g\bigr\}
\end{align*}
The claim is obtained by setting \[ \alpha_{q,g}=-\ln
\textnormal{Prob}\bigl\{M(g,g)\le g\bigr\},\;\] and
\[\alpha_{2,\infty}=-\!\!\lim_{g\rightarrow\infty,q=2}\ln
\textnormal{Prob}\bigl\{M(g,g)\le g\bigr\}.
\]

\section{Proofs of Generalized Results of Collector's Brotherhood Problem}
\label{app:coupon_proofs}
\subsection*{Proof of Theorem~\ref{thm:nonuniform_genfunc}}\noindent
Our proof generalizes the symbolic method of \cite{doubledixiecup}.

Let $\xi$ 
be the event that the number of copies of coupon $G_i$ is at
least $m_i$ for every $i=1,2,\dots,n$. For integer $t\ge0$, let
$\xi(t)$ be the event that $\xi$ has occurred after a total of $t$
samplings, and let $\bar{\xi}(t)$ be the complementary event. Then,
the tail probability
$\mathrm{Prob}[T(\boldsymbol{\rho},\mathbf{m})>t]=\mathrm{Prob}[\bar{\xi}(t)]=\nu_t$.

To derive $\nu_t$, we introduce an operator $f$ acting on an
$n$-variable polynomial $g$. $f$ removes all monomials
$x_1^{w_1}x_2^{w_2}\dots x_n^{w_n}$ in $g$ satisfying $w_1\ge m_1,
\dots,w_n\ge m_n$. Note that $f$ is a linear operator, i.e., if
$g_1$ and $g_2$ are two polynomials in the same $n$ variables, and
$a$ and $b$ two scalars, we have $af(g_1)+bf(g_2)=f(ag_1+bg_2)$.

Each monomial in $(x_1+\dots+x_n)^t$ corresponds to one of the $n^t$
possible outcomes of $t$ samplings, with the exponent of $x_i$ being
the number of copies of coupon $G_i$. Since the samplings are
independent, the probability of an outcome $x_1^{w_1}x_2^{w_2}\dots
x_n^{w_n}$ is $\rho_1^{w_1}\rho_2^{w_2}\dots \rho_n^{w_n}$. Hence,
the probability of $\bar{\xi}(t)$ is $f((x_1+\dots+x_n)^t)$, when
evaluated at $x_i=\rho_i$ for $i=1,2,\dots n$, i.e.,
\begin{equation}\label{eq:nu_t}
\nu_t =f((x_1+\dots+x_n)^t)|_{x_i=\rho_i,
i=1,\dots,n}.\end{equation} Hence, (\ref{eq:nu_t}) and
(\ref{eq:nonuniform_genfunc_def}) lead to
\[\varphi_{T(\boldsymbol{\rho},\mathbf{m})}(z)
=\sum_{t\ge0} f\left((x_1+\dots+x_n)^t\right)z^t|_{x_i=\rho_i,
i=1,\dots,n}.\]

The identity
\[\int_{0}^{\infty}\frac{1}{t!}y^te^{-y}dy=1\] and
the linearity of the operator $f$ imply that
\begin{align}
\varphi_{T(\boldsymbol{\rho},\mathbf{m})}(z)
=&\int_{0}^{\infty}\sum_{t\ge0}\frac{f\left((x_1+\dots+x_n)^t\right)}{t!}z^ty^te^{-y}dy \notag\\
=&\int_{0}^{\infty}f\Bigl(\sum_{t\ge0}\frac{(x_1zy+\dots+x_nzy)^t}{t!}\Bigr)e^{-y}dy \notag\\
=&\int_{0}^{\infty}f\left(\exp(x_1zy+\dots+x_nzy)\right) e^{-y}dy
\label{eq:expsum}
\end{align}
evaluated at $x_i=\rho_i, i=1,\dots,n$.

We next find the sum of the monomials in the polynomial expansion of
$\exp(x_1+\dots+x_n)$ that should be removed under $f$. Clearly,
this sum should be $\prod_{i=1}^n
\left(e^{x_i}-S_{m_i}(x_i)\right)$, where $S$ is defined in
(\ref{eq:sm_m}) and (\ref{eq:sm_0})). Therefore,
\begin{align*}
&f\left(\exp(x_1zy+\dots+x_nzy)\right)|_{x_i=\rho_i, i=1,\dots,n}
\\&=e^{zy}-
\prod_{i=1}^n \left(e^{\rho_izy}-S_{m_i}(\rho_izy)\right).
\end{align*}
\begin{align}\label{eq:genfunc_b4_n}
\varphi_{T(\boldsymbol{\rho},\mathbf{m})}(z) =
&\int_{0}^{\infty}\left[e^{zy}-\prod_{i=1}^n
\left(e^{\rho_izy}-S_{m_i}(\rho_izy)\right)\right] e^{-y}dy
\end{align}

\subsection*{Proof of Corollary~\ref{thm:nonuniform_moments}}\noindent
Note that \begin{align*}
\varphi_{T(\boldsymbol{\rho},\mathbf{m})}(z)&=\sum_{t=0}^{\infty}\mathrm{Prob}[T(\boldsymbol{\rho},\mathbf{m})>t]z^t\\
&=\sum_{t=0}^{\infty}\sum_{j=t+1}^{\infty}\mathrm{Prob}[T(\boldsymbol{\rho},\mathbf{m})=j]z^t\\
&=\sum_{j=1}^{\infty}\mathrm{Prob}[T(\boldsymbol{\rho},\mathbf{m})=j]\sum_{t=0}^{j-1}z^t
\end{align*}
\begin{align*}
E[T(\boldsymbol{\rho},\mathbf{m})]&=\sum_{j=1}^{\infty}j\mathrm{Prob}[T(\boldsymbol{\rho},\mathbf{m})=j]=\varphi_{T(\boldsymbol{\rho},\mathbf{m})}(1).
\end{align*}
Similarly,
\begin{align*}
\varphi_{T(\boldsymbol{\rho},\mathbf{m})}'(z)&=\sum_{t=0}^{\infty}t\mathrm{Prob}[T(\boldsymbol{\rho},\mathbf{m})>t]z^{t-1}\\
&=\sum_{j=1}^{\infty}\mathrm{Prob}[T(\boldsymbol{\rho},\mathbf{m})=j]\sum_{t=0}^{j-1}tz^{t-1}\\
\varphi_{T(\boldsymbol{\rho},\mathbf{m})}'(1)&=\sum_{j=1}^{\infty}\frac{1}{2}j(j-1)\mathrm{Prob}[T(\boldsymbol{\rho},\mathbf{m})=j].
\end{align*}
Hence,
\begin{align*}
E[T(\boldsymbol{\rho},\mathbf{m})^2]
=&\sum_{j=1}^{\infty}j^2\mathrm{Prob}[T(\boldsymbol{\rho},\mathbf{m})=j]\\
=&2\varphi_{T(\boldsymbol{\rho},\mathbf{m})}'(1)+\varphi_{T(\boldsymbol{\rho},\mathbf{m})}(1),
\end{align*}
and consequently,
\begin{align*}
&\text{Var}[T(\boldsymbol{\rho},\mathbf{m})]=
2\varphi_{T(\boldsymbol{\rho},\mathbf{m})}'(1)+
\varphi_{T(\boldsymbol{\rho},\mathbf{m})}(1)-\varphi_{T(\boldsymbol{\rho},\mathbf{m})}^2(1).
\end{align*}

 We have
\begin{align*}
&\varphi_{T(\boldsymbol{\rho},\mathbf{m})}'(z)=\\
&\int_0^{\infty}x\biggl(e^{-x(1-z)} -\sum_{i=1}^{n}\rho_i\frac{
e^{-\rho_i x(1-z)}-S_{m_i-1}(\rho_i xz)e^{-\rho_i
x}}{e^{-\rho_ix(1-z)}-S_{m_i}(\rho_i xz)e^{-\rho_i
x}}\cdot\\
&\quad\cdot\prod_{j=1}^{n}\left(e^{-\rho_jx(1-z)}-S_{m_j}(\rho_j
xz)e^{-\rho_j x}\right)\biggr)dx,
\end{align*}
and from there, we can get $\varphi_{T(\boldsymbol{\rho},\mathbf{m})}'(1)$ and $\text{Var}[T(\boldsymbol{\rho},\mathbf{m})].$

\subsection*{Proof of Theorem~\ref{thm:uniform_genfunc}}\noindent
We again apply the Newman-Shepp symbolic method. Similar to the
proof of Theorem~\ref{thm:nonuniform_genfunc}, we introduce an
operator $f$ acting on an $n$-variable polynomial $g$. For a
monomial $x_1^{w_1}\dots x_n^{w_n}$, let $i_j$ be the number of
exponents $w_u$ among $w_1,\dots,w_n$ satisfying $w_u\ge k_j$, for
$j=1,\dots,A$. $f$ removes all monomials $x_1^{w_1}\dots x_n^{w_n}$
in $g$ satisfying $i_1\ge k_1, \dots,i_A\ge k_A$ and $i_1\le
\dots\le i_A$. $f$ is again a linear operator. One can see that
\begin{align}\label{eq:expsum}
&\varphi_{U(\mathbf{m},\mathbf{k})}(z)=\\
&\int_{0}^{\infty}f\left(\exp(x_1zy+\dots+x_nzy)\right)
e^{-y}dy|_{x_1=x_2=\dots=x_n=\frac{1}{n}}.\notag
\end{align}

We choose integers $0=i_0\le  i_1\le\dots\le i_A\le i_{A+1}=n$,
such that $i_{j}\ge k_{j}$ for $j=1,\dots,A$, and then partition
indices $\{1,\dots,n\}$ into $(A+1)$ subsets
$\mathcal{I}_1,\dots,\mathcal{I}_{A+1}$, where
$\mathcal{I}_j(j=1,\dots,A+1)$ has $i_{j}-i_{j-1}$ elements. Then
\begin{equation}\label{eq:expprodform}
\prod_{j=1}^{A+1}\prod_{i\in\mathcal{I}_{j}}(S_{m_{j-1}}(x_i)-S_{m_j}(x_i))
\end{equation}
equals the sum of all monomials in $\exp(x_1+\dots+x_n)$ with
($i_{j}-i_{j-1}$) of the $n$ exponents smaller than $m_{j-1}$ but
greater than or equal to $m_{j}$, for $j=1,\dots,A+1$. (Here $S$ is
as defined by (\ref{eq:sm_m}-{\ref{eq:sm_0}}).) The number of such
partitions of $\{1,\dots,n\}$ is equal to
${{n}\choose{n-i_A,\dots,i_2-i_1,i_1}}=\prod_{j=0}^{A}{{i_{j+1}}\choose
{i_{j}}}$. Finally, we need to sum the terms of the form
(\ref{eq:expprodform}) over all partitions of all choices of
$i_1,\dots,i_A$ satisfying $k_{j}\le i_{j}\le i_{j+1}$ for
$j=1,\dots,A$:
\begin{align}\label{eq:fform}
&f\left(\exp(x_1zy+\dots+x_nzy)\right)|_{x_1=\dots=x_n=\frac{1}{n}}
=\exp(zy)-\notag\\
&\sum_{{{(i_0,i_1,\dots,i_{A+1}):\atop i_0=0,i_{A+1}=n}\atop
i_j\in[k_j, i_{j+1}]}\atop
j=1,2,\dots,A}\prod_{j=0}^{A}{{i_{j+1}}\choose{i_j}}\left[S_{m_{j}}(\frac{zy}{n})-S_{m_{j+1}}(\frac{zy}{n})\right]^{i_{j+1}-i_{j}}.
\end{align}
Bringing (\ref{eq:fform}) into (\ref{eq:expsum}) gives our result in
Theorem \ref{thm:uniform_genfunc}.

\section{Proof of Theorem \ref{thm:disjoint_expected}}
\label{app:disjoint_expected}
\begin{align}
E[&W(\boldsymbol{\rho},\mathbf{g})] \notag\\
&=\sum_{\mathbf{m}}\left(\prod_{i=1}^n
\textnormal{Pr}[M_i=m_i]\right)
E[T(\boldsymbol{\rho},\mathbf{m})]\notag\\
&=\int_0^{\infty}\left[1-\prod_{i=1}^{n}\sum_{m_i}
\textnormal{Pr}[M_i=m_i](1-S_{m_i}(\rho_ix)e^{-\rho_ix})\right]dx \label{eq:disjoint_expected_exchange}\\
&=\int_0^{\infty}\left(1-\prod_{i=1}^{n}\left(1-e^{-\rho_ix}E_{M_i}\left[S_{M_i}(\rho_ix)\right]\right)\right)dx.\notag
\end{align}
(\ref{eq:disjoint_expected_exchange}) comes from the distributivity.

Since
\begin{equation*}
E_{M_i}\left[S_{M_i}(\rho_ix)\right]=\sum_{j=0}^{\infty}\frac{(\rho_ix)^j}{j!}\textnormal{Pr}[M_i>j],
\end{equation*}
by Claim~\ref{thm:wait_ccdf},
\begin{align*}
&E_{M_i}\left[S_{M_i}(\rho_ix)\right]\\
&<S_{g_i}(\rho_ix)+\sum_{j=g_i}^{\infty}\frac{(\rho_ix)^j}{j!}
 \alpha_{q,g} q^{-(j-g)}\\
 &=S_{g_i}(\rho_ix)+\alpha_{q,g_i}q^{g_i}e^{\rho_ix/q}-\alpha_{q,g_i}q^{g_i}S_{g_i}(\rho_ix/q),
\end{align*}
where $$\alpha_{q,g_i}=-\ln\textnormal{Pr}\bigl\{M(g_i,g_i)\le
g_i\bigr\}=-\sum_{k=0}^{g_i-1}\ln\left(1-q^{k-g_i}\right)$$
for $i=1,2,\dots,n$.

Hence, we have (\ref{eq:disjoint_expected_upper}).


Expression (\ref{eq:disjoint_sec_moment}) for
$E[W^2(\boldsymbol{\rho},\mathbf{g})]$ can be derived in the same manner, and then
the expression for $\text{Var}[W(\boldsymbol{\rho},\mathbf{g})]$ immediately follows.

\section{Proof of Theorem \ref{thm:union_overlap}}
\label{app:union_overlap}
 Without loss of
generality, let $I=\{1,2,\dots,s\}$ and $j=s+1$, and define
$\mathcal{R}_s = \cup_{i=1}^{s} R_i$, $\mathcal{B}_s=\cup_{i=1}^s
B_i$, and $\mathcal{G}_s = \cup_{i=1}^s G_i$ for $s=0,1,\dots,n-1$.
Then, $E\left[|\left(\cup_{i\in I} G_i\right)\cap
G_j|\right]=E\left[|\mathcal{G}_s \cap G_{s+1} |\right]$. For any
two sets $X$ and $Y$, we use $X+Y$ to denote $X\cup Y$ when $X\cap
Y=\emptyset$.
\begin{align*}
\mathcal{G}_s \cap G_{s+1} =& (\mathcal{B}_s + \mathcal{R}_s\backslash \mathcal{B}_s)\cap(B_{s+1}+R_{s+1})\\
=&\mathcal{B}_s\cap R_{s+1} + \mathcal{R}_s\cap B_{s+1} +
(\mathcal{R}_s\backslash \mathcal{B}_s)\cap R_{s+1},
\end{align*}
and therefore
\begin{align}\label{eq:part}
E[|\mathcal{G}_s \cap  G_{s+1}|]= & E[|\mathcal{B}_s\cap
R_{s+1}|]+\\ & E[|\mathcal{R}_s\cap B_{s+1}|] +
E[|(\mathcal{R}_s\backslash \mathcal{B}_s)\cap R_{s+1}|].\notag
\end{align}
Using Claim \ref{thm:pi}, we have
\begin{align}
&E[|\mathcal{B}_s\cap R_{s+1}|]=sh\pi,\label{eq:exp1}\\
&E[|\mathcal{R}_s\cap B_{s+1}|]=h[1-(1-\pi)^s],\label{eq:exp2}\\
&E[|(\mathcal{R}_s\backslash \mathcal{B}_s)\cap
R_{s+1}|]=(n-s-1)h\pi[1-(1-\pi)^s],\label{eq:exp3}
\end{align}
where $\pi$ is as defined in Claim \ref{thm:pi}.
 Bringing (\ref{eq:exp1})-(\ref{eq:exp3}) into (\ref{eq:part}),
we obtain (\ref{eq:union_overlap}).




Furthermore, when $n\rightarrow\infty$, if $l/h\rightarrow\alpha$
and $s/n\rightarrow\beta$, then
\begin{align*}
E[|\mathcal{G}_s \cap G_{s+1}|] =& g\cdot \left[1-\bar{\pi}^s\right]
+  s h \cdot \pi \bar{\pi}^s\\
\rightarrow&h(1+\alpha)
\Bigl[1-\Bigl(1-\frac{\alpha}{n-1}\Bigr)^{n\beta}\Bigr]+\\&h\alpha\beta
\Bigl(1-\frac{\alpha}{n-1}\Bigr)^{n\beta}\notag\\
\rightarrow& h\Bigl[(1+\alpha)(1-e^{-\alpha\beta})+\alpha
\beta e^{-\alpha\beta}\Bigr]\notag\\
=& h\Bigl[1+\alpha -
(1+\alpha-\alpha\beta)e^{-\alpha\beta}\Bigr]\notag
\end{align*}

\section*{Acknowledgement}
The authors would like to thank the anonymous reviewers for
helpful suggestions to improve the presentation of the paper.

\bibliographystyle{IEEEtran}

\end{document}